\newtheorem{lemma}{Lemma}
\newtheorem{prop}{Proposition}
\newtheorem{thm}{Theorem}
\newtheorem{coro}{Corollary}
\theoremstyle{remark}
\newtheorem{ex}{Example}
\newtheorem{rem}{Remark}
\def\Ce {\mathcal C}
\def\Ha {\mathcal H}
\def\Ka {\mathcal K}
\def\Ve {\mathcal V}
\def\De {\mathcal D}
\def\Me {\mathcal M}
\def\Oe {\mathcal O}
\def\Te{\mathcal T}
\def\Pe {\mathcal P}
\def\Le {\mathcal L}
\def\ee {\mathcal E}
\def\Se {\mathfrak S}
\def\<{\langle}
\def\>{\rangle}
\def\Tr{{\rm Tr}\,}
\def\b{\{b\}}
\def\supp{{\rm supp}\,}
\def\qed{{\hfill $\square$}\vskip 5pt}
\title{Base norms  and  discrimination of generalized quantum channels}
\author{A. Jen\v cov\' a\\
Mathematical Institute, Slovak Academy of Sciences\\
\v Stef\'anikova 49, Bratislava, Slovakia}
\date{}
\begin{document}
\maketitle
{\abstract We introduce and study norms in the space of hermitian matrices, obtained from base norms in positively generated
 subspaces. These norms are closely related to discrimination of so-called generalized quantum channels, including quantum
 states, channels and networks. We further introduce generalized quantum decision problems and show that the maximal average payoff of decision procedures is again given by these norms. We also study optimality of  decision procedures, 
in particular, we obtain a necessary and sufficient condition
under which an optimal 1-tester for dicrimination of quantum channels exists, such that the input state is maximally entangled. }
\section{Introduction and preliminaries}

%reci + positive maps + cp maps+ choi matrix
It is well known that in the problem of discrimination of quantum states, the best possible distinguishability of
two states $\rho_0$ and $\rho_1$ is given by the trace norm $\| \rho_0-\rho_1\|_1$, \cite{helstrom, holevo}.
 The set of states forms a base of the convex cone of positive operators and
the restriction of the trace norm  to hermitian operators is the corresponding base  norm.
Similarly,  it was shown in \cite{wolf} that more
 general distinguishability measures, obtained by specification of the allowed measurements e.g. for bipartite states,
 are obtained from base norms associated with more general positive cones. This correspondence is related to  duality
 of  the base norm  and  the order unit norm, with respect to a given positive cone.

  In a similar problem for quantum channels, and recently also quantum networks, the diamond norm $\|\cdot\|_\diamond$ for channels \cite{kitaev}, resp. the strategy $N$-norm 
$\|\cdot\|_{N\diamond}$ \cite{gutoski, daria_testers} for networks is
 obtained. Via the Choi isomorphism, quantum networks are represented by certain positive operators on the tensor
 product of the input and output spaces, so-called $N$-combs \cite{daria_circ, daria}, see also \cite{guwat}. The set of $N$-combs is the intersection 
of the multipartite state space by a positively generated subspace of the real vector space of hermitian operators.
Since this subspace inherits the order structure and the set of $N$-combs forms a base of its
 positive cone, it is natural to expect that the distinguishability norm $\|\cdot\|_{N\diamond}$ is in fact  the corresponding 
base norm.

Motivated by this question, we study positively generated subspaces of the space of hermitian operators 
$B_h(\Ha)$ acting on a finite dimensional Hilbert space $\Ha$. For a 
given base $B$ of the positive cone, we define a distinguishability   measure  in terms of tests that are defined as affine maps 
$B\to[0,1]$ and show that this measure is given by the base norm, this, in fact, is easy to see for any finite dimensional
 ordered vector space. We then study a natural extension of this norm to $B_h(\Ha)$ and its dual norm. An example of 
 such a  base  is the set of Choi matrices of so-called generalized channels, 
this contains the set of $N$-combs as a special case.  For $N$-combs,
 the obtained norm coincides with $\|\cdot\|_{N\diamond }$ and we  recover some of the results of \cite{gutoski}
 concerning the dual norm. Moreover, we find a suitabe expression for this norm, closely related to the definition 
 of $\|\cdot\|_\diamond$. 
 
 It the last section, we introduce generalized quantum decision problems with respect to a base $B$. We show that maximal average payoff (or minimal average loss) of generalized decision procedures is again given by a  base norm. We find optimality conditions for generalized decision procedures, in particular, for quantum measurements 
 and testers. In the case of multiple
hypothesis testing for states,
we get the results  obtained previously in \cite{renner,damo,holevo1}. In the case of discrimination of quantum channels,
we find a necessary and sufficient condition for existence of an optimal 
 tester such that the input state is maximally entangled.

The rest of the present section contains some basic definitions and preliminary results on discrimination of quantum devices, as well as convex cones, bases and base norms. 

\subsection{Discrimination of quantum states, channels and networks}

Let $\Ha$ be a finite dimensional Hilbert space and let $B(\Ha)$ be the set of bounded operators on $\Ha$.
We denote by $B_h(\Ha)$ the set of self-adjoint operators,  $B(\Ha)^+$ the  cone of positive operators 
and 
 $\Se(\Ha):=\{ \rho\ge 0, \Tr \rho=1\}$  the set of states in $B(\Ha)$. Let $\Ka$ be  another finite dimensional 
Hilbert space. It is well known that $B(\Ka\otimes \Ha)$ corresponds to 
the  set of all linear maps $B(\Ha)\to B(\Ka)$, via the Choi representation:
\begin{equation}\label{eq:choirep}
X_\Phi=(\Phi\otimes id_\Ha)(\Psi),\qquad \Phi_X(a)=\Tr_\Ha[(I_\Ka\otimes a^{\mathsf T})X]
\end{equation}
here $\Psi=|\psi\>\<\psi|$ and $|\psi\>=\sum_i |i\>\otimes |i\>$ for an ONB $\{|i\>, i=1,\dots,\dim(\Ha)\}$ in $\Ha$,
 $a^\mathsf T$ denotes transpose of $a$. 
In this correspondence, $B(\Ka\otimes\Ha)^+$ is identified with the set of completely positive maps and $B_h(\Ka\otimes\Ha)$ 
with hermitian maps, that is, maps satisfying $\Phi(a^*)=\Phi(a)^*$.

 Consider the problem of quantum state discrimination: suppose the quantum system represented by $\Ha$ is
 known to be in one of two given states $\rho_0$ or $\rho_1$ and the task is to decide which of them is the true state.
This is done by using a test, that is  a binary positive operator valued measure (POVM). This is given by an operator
 $0\le M\le I$, with the interpretation that 
$\Tr M\rho$ is the probability of deciding for $\rho_0$ if the true value of the state is $\rho$. Equivalently, a test 
 can be defined as an affine map $\Se(\Ha)\to [0,1]$. 

Given an a priori probability $0\le \lambda\le1$ that the true state is $\rho_0$,  we need to minimize the average
 probability of error over all tests, that is to find the value of
 \[
\Pi_\lambda(\rho_0,\rho_1):=\min_{0\le M\le I}\lambda \Tr (I-M)\rho_0+(1-\lambda)\Tr M\rho_1,
 \]
this is the minimum Bayes error probability. Then \cite{helstrom, holevo} 
\[
\Pi_\lambda(\rho_0,\rho_1)=\frac12-\frac12\|\lambda\rho_0-(1-\lambda)\rho_1\|_1,
\]
where $\|a\|_1:=\Tr |a|$, $a\in B(\Ha)$ is the trace norm.

Let now $\Ha$ and $\Ka$ be two finite dimensional Hilbert spaces and consider the problem of discrimination 
of channels. Here we have to decide between two channels
$\Phi_0$ and $\Phi_1$ and this time the tests are given by binary quantum 1-testers \cite{daria_testers}, or PPOVMs \cite{ziman},
 which are  positive operators $T\in B(\Ka\otimes \Ha)^+$, such that $T\le I_\Ka\otimes \sigma$ for some
 $\sigma\in \Se(\Ha)$.
 These correspond to triples $(\Ha_A,\rho,M)$, where $\Ha_A$ is an ancillary Hilbert space, $\rho\in \Se(\Ha\otimes\Ha_A)$
 and $0\le M\le I$, $M\in B(\Ka\otimes\Ha_A)$.  The probability of choosing $\Phi_0$ if the true value is $\Phi$
 for a tester $T$ is given by 
\[
p(T,\Phi):=\Tr T X_\Phi=\Tr M(\Phi\otimes id_A)(\rho).
\]
The minimum Bayes error probability is now
\[
\Pi^1_\lambda(\Phi_0,\Phi_1):=\min_T \lambda(1- p(T,\Phi_0))+(1-\lambda)p(T,\Phi)=\frac12-\frac12\|\lambda\Phi_0-(1-\lambda)\Phi_1\|_\diamond
\]
where the diamond norm $\|\Phi\|_\diamond$ for a hermitian map $\Phi$ is defined as
\cite{kitaev, watrous}
\begin{eqnarray*}
\|\Phi\|_\diamond&=&\sup_{\dim(\Le')<\infty}\, \sup_{\rho\in\Se(\Ha\otimes\Le')}\|\Phi\otimes id_{\Le'}(\rho)\|_1\\
&=&\sup_{\rho\in\Se(\Ha\otimes\Le)}\|\Phi\otimes id_{\Le}(\rho)\|_1,\qquad \dim(\Le)=\dim(\Ha)
\end{eqnarray*}
By duality, this norm is related to the $cb$-norm for completely bounded linear maps, see \cite{paulsen}.
\begin{figure}
\begin{center}
\includegraphics{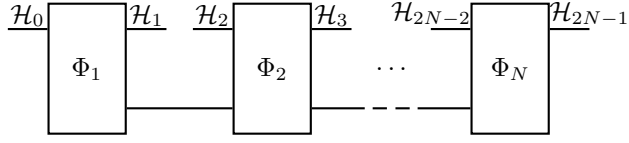}
\caption{A deterministic quantum $N$-comb}\label{fig:N_comb}

\end{center}
\end{figure}

Let now $\{\Ha_0$, $\Ha_1,\dots, \Ha_{2N-1}\}$ be finite dimensional Hilbert spaces. Consider a sequence of channels
$\Phi_i: B(\Ha_{2i-2}\otimes\Ha_A)\to B(\Ha_{2i-1}\otimes \Ha_A)$, $i=1,\dots,N$, connected by the ancilla $\Ha_A$
 as indicated on fig. \ref{fig:N_comb} (the first and last ancilla is traced out).
 This defines a channel
$\Phi: B(\Ha_0\otimes\Ha_2\otimes\dots\otimes \Ha_{2N-2})\to B(\Ha_1\otimes \Ha_3\otimes\dots\otimes\Ha_{2N-1})$, such
 channels describe quantum networks. The channels $\Phi_1,\dots,\Phi_N$ are not unique, in fact, these can always be 
supposed to be isometries.
 A (deterministic) quantum $N$-comb is defined as the Choi matrix  $X_\Phi$ of such a channel, see \cite{daria} for more about quantum networks and $N$-combs. The same definition, called a (non-measuring) quantum $N$-round strategy, 
was also introduced in 
\cite{guwat}. A (non-measuring) quantum $N$-round  co-strategy can be  defined as an $(N+1)$-strategy for the sequence of
 spaces $\{\mathbb C,\Ha_0,\dots,\Ha_{2N-1},\mathbb C\}$. 

The tests for discrimination of two networks $\Phi^0$ and $\Phi^1$ are given by quantum $N$-testers, which are obtained 
 by an $(N+1)$-comb such that the first channel has 1-dimensional input space (hence is a state) and  a
(binary) POVM is applied to the ancilla \cite{daria, daria_testers}, see figs. \ref{fig:tester}, \ref{fig:apply}. 
This can be represented by a pair $(T_0,T_1)$ of positive operators, such that $T_0+T_1$ is an $(N+1)$-round co-strategy,
 \cite{daria, guwat,gutoski}.
\begin{figure}
\begin{center}
\includegraphics{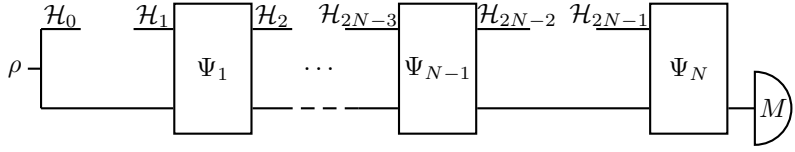}
\caption{A quantum $N$-tester}\label{fig:tester}

\end{center}
\end{figure}

\begin{figure}
\begin{center}
\includegraphics{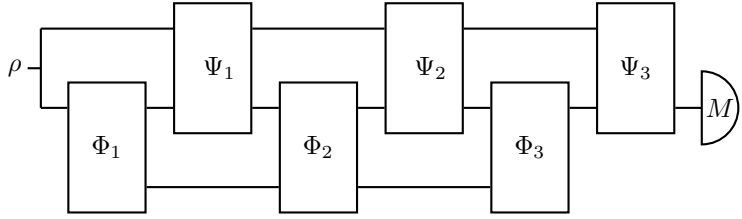}
\caption{A 3-tester applied to a 3-comb}\label{fig:apply}

\end{center}
\end{figure}

The minimal Bayes error probability now has the form
\[
\Pi^N_\lambda(\Phi^0,\Phi^1)=\frac12-\frac12\|\lambda \Phi^0-(1-\lambda)\Phi^1\|_{N\diamond}
\]
where for any hermitian $\Phi$,
\begin{eqnarray}
\|\Phi\|_{N\diamond}&=& \sup_T\|(T_0+T_1)^{1/2}X_\Phi(T_0+T_1)^{1/2}]\|_1,\qquad \cite{daria_testers}\\
&=& \sup_T\Tr X_\Phi(T_0-T_1),\qquad \cite{gutoski}
\end{eqnarray}
where the supremum is taken over all $N$-testers. The dual norm was also obtained in \cite{gutoski}, as
\[
\|\Phi\|_{N\diamond}^*=\sup_S\Tr X_\Phi(S_1-S_0),
\]
where the supremum is taken over the set of pairs of positive operators such that $S_0+S_1$ is an $N$-round strategy
 ($N$-comb).

\subsection{Convex cones, bases and base norms}\label{sec:basen}

Let $\Ve$ be a finite dimensional real vector space and let $\Ve^*$ be the dual 
space, with duality $\<\cdot,\cdot\>$. A subset  $Q\subset \Ve$ is a 
{convex cone}  if $\lambda q_1+ \mu q_2\in Q$ whenever $q_1,q_2\in Q$ and 
$\lambda,\mu\ge 0$.  The cone is 
pointed if $Q\cap-Q=\{0\}$ and generating if $\Ve=Q-Q$. Closed pointed convex cones are in one-to-one correspondence with partial orders in $\Ve$, by $x\le_Q y \iff y-x\in Q$.  

The {dual cone} of $Q$ is defined as
\[
Q^*=\{f\in \Ve^*, \<f,q\>\ge 0, q\in Q\}
\]
This is a closed convex cone and $Q^{**}=Q$ if $Q$ is closed. Moreover, a closed convex cone $Q$ is pointed if and only if $Q^*$ is generating . A closed  pointed generating convex cone is called a proper cone.

A {base} of the proper cone $Q$ is a compact convex subset $B\subset Q$, such that each nonzero element $q\in Q$ has a unique representation in the form $q=\lambda b$ with $b\in B$ and $\lambda>0$. It is clear that any base {generates} the cone $Q$, in the sense that $Q=\bigcup_{\lambda\ge 0}\lambda B$.
Then any element $v\in \Ve$ can be written as $v=\lambda b_1-\mu b_2$, $\lambda,\mu\ge 0$, $b_1,b_2\in B$.

For any base $B$, the map $Q\ni q=\lambda b\mapsto \lambda$ extends uniquely to a linear functional $e_B\in Q^*$ and we have
$B=\{q\in Q, 
\<e_B,q\>=1\}$.

\begin{lemma} Let $f\in Q^*$. Then $f\in int(Q^*)$ if and only if
\[
B_{f}:=\{q\in Q, \<f,q\>=1\}
\]
is a base of $Q$.
\end{lemma}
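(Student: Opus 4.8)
The plan is to prove both implications directly from the definitions, using the correspondence between bases and strictly positive functionals that was sketched just before the lemma.

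First I would prove the "if" direction in contrapositive form: assume $f\in Q^*\setminus int(Q^*)$ and show $B_f$ is not a base. Since $Q^*$ is closed, a point on its boundary can be separated: there is a nonzero $q\in Q^{**}=Q$ with $\<f,q\>=0$ (this uses that $Q$ is proper, so $Q^{**}=Q$, and that boundary points of a closed convex cone admit a supporting functional lying in the dual of that cone, i.e. in $Q$). For such $q\ne 0$ there is no $\lambda>0$ with $\lambda^{-1}q\in B_f$, since $\<f,\lambda^{-1}q\>=0\ne 1$; hence $q$ has no representation $q=\lambda b$ with $b\in B_f$, $\lambda>0$, so $B_f$ fails the defining property of a base.

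For the "only if" direction, suppose $f\in int(Q^*)$. I must check the three things a base requires: (i) $B_f$ is convex — immediate, as it is the intersection of $Q$ with the affine hyperplane $\{\<f,\cdot\>=1\}$; (ii) every nonzero $q\in Q$ has a unique representation $q=\lambda b$, $\lambda>0$, $b\in B_f$ — this is where $f\in int(Q^*)$ is used: for $q\in Q$, $q\ne0$, I claim $\<f,q\>>0$. Indeed if $\<f,q\>=0$ then, since $f$ is interior to $Q^*$, for small $\varepsilon>0$ the functional $f-\varepsilon g$ still lies in $Q^*$ for every $g$ in a neighborhood of $0$; choosing $g$ with $\<g,q\><0$ gives $\<f-\varepsilon g,q\>=-\varepsilon\<g,q\><0$, contradicting $f-\varepsilon g\in Q^*$. (One needs such a $g$ to exist, i.e. $q$ is not annihilated by all of $\Ve^*$, which holds since $q\ne0$.) Hence $\<f,q\>>0$ and we may set $\lambda=\<f,q\>$, $b=\lambda^{-1}q\in B_f$; uniqueness of $(\lambda,b)$ follows by applying $\<f,\cdot\>$ to any such representation. (iii) $B_f$ is compact: it is closed as the intersection of the closed set $Q$ with a closed hyperplane, so it remains to show it is bounded. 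This is the step I expect to be the main obstacle. Boundedness should follow from $f$ being strictly positive together with $Q$ being pointed: if $B_f$ were unbounded, pick $q_n\in B_f$ with $\|q_n\|\to\infty$; by compactness of the unit sphere, $q_n/\|q_n\|\to u$ for some unit vector $u$, and $u\in Q$ since $Q$ is closed and a cone, while $\<f,u\>=\lim \<f,q_n\>/\|q_n\|=\lim 1/\|q_n\|=0$. But $u\ne0$ and $u\in Q$ with $\<f,u\>=0$ contradicts the strict positivity of $f$ established in (ii).

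Assembling (i)–(iii) shows $B_f$ is a base, completing the proof. The only genuinely delicate points are the separation argument in the "if" direction (making sure the separating functional can be taken in $Q$, which is exactly where $Q$ proper, hence $Q^{**}=Q$, enters) and the compactness argument in (iii), both of which hinge on the interplay between $f\in int(Q^*)$ and pointedness of $Q$.
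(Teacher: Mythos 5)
Your proof is correct and follows the same route as the paper's, which reduces the statement to the equivalence between $f\in int(Q^*)$ and strict positivity of $f$ on $Q\setminus\{0\}$ (the paper simply cites Rockafellar, Theorem 11.6, for this equivalence and declares the remaining step ``quite clear'', whereas you supply the separation, perturbation and compactness arguments explicitly). The only blemish is a sign slip in the perturbation step: you need to choose $g$ with $\<g,q\>>0$, not $\<g,q\><0$, so that $\<f-\varepsilon g,q\>=-\varepsilon\<g,q\><0$ contradicts $f-\varepsilon g\in Q^*$.
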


{\it Proof.} It is quite clear that $B_{f}$ is a base of $Q$ if and only if
$\<f,q\>>0$ for any nonzero $q\in Q$. By \cite[Theorem 11.6]{rockafellar},
 this is equivalent with $f\in int (Q^*)$.

\qed

Let $\le$ denote the order in $\Ve$ given by $Q$.
An element $e\in \Ve$ is an {order unit} in $\Ve$ if for any 
$v\in \Ve$, there is some
 $r>0$ such that $re\ge v$. It is easy to see that $e$ is an order unit if and only if $e\in int(Q)$. Consequently,

\begin{coro}\label{coro:base} Any base $B$ of $Q$ defines an order unit $e_B$ in  $\Ve^*$ and, 
conversely, any order unit $e$ in  $\Ve^*$ defines a base $B_e$ of $Q$. We have 
$e_{B_e}=e$ and $B_{e_B}=B$. 

\end{coro}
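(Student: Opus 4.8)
The plan is to assemble the statement directly from Lemma~1, the remark preceding it on the functional $e_B$, and the already-quoted fact that $e$ is an order unit if and only if $e\in int(Q)$, applied this time in the dual space $\Ve^*$ with its positive cone $Q^*$. Before doing so I would note that since $Q$ is a proper cone, $Q^*$ is again proper: it is always closed, it is generating because $Q$ is pointed, and it is pointed because $Q$ is generating (together with $Q^{**}=Q$, these are exactly the facts recorded in Section~\ref{sec:basen}). Hence the order-unit characterization applies verbatim to $\Ve^*$: an element $e\in\Ve^*$ is an order unit for the order induced by $Q^*$ precisely when $e\in int(Q^*)$.

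For the first direction, let $B$ be a base of $Q$. By the remark preceding Lemma~1, the assignment $Q\ni q=\lambda b\mapsto\lambda$ extends to a unique functional $e_B\in Q^*$ with $B=\{q\in Q,\ \<e_B,q\>=1\}$; in particular $B_{e_B}=B$. Since $B$ is a base, Lemma~1 gives $e_B\in int(Q^*)$, and therefore $e_B$ is an order unit in $\Ve^*$.

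For the converse, let $e$ be an order unit in $\Ve^*$, so $e\in int(Q^*)$. Lemma~1 then says that $B_e=\{q\in Q,\ \<e,q\>=1\}$ is a base of $Q$, and it remains only to check $e_{B_e}=e$. Given a nonzero $q\in Q$, write $q=\lambda b$ with $b\in B_e$ and $\lambda>0$; then $\<e,q\>=\lambda\<e,b\>=\lambda$, so $e$ agrees on $Q$ with the map $q\mapsto\lambda$ that defines $e_{B_e}$. Since $Q-Q=\Ve$, a linear functional on $\Ve$ is determined by its restriction to $Q$, whence $e_{B_e}=e$.

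There is no real obstacle here: the argument is pure bookkeeping. The only points that need genuine care are the verification that $Q^*$ is a proper cone (so that the order-unit/interior dictionary is legitimately available in $\Ve^*$) and the final uniqueness step, which relies on $Q$ generating $\Ve$; everything else is immediate from Lemma~1 and the preceding remark.
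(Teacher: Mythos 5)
Your proof is correct and follows the same route the paper intends: the corollary is presented there as an immediate consequence of Lemma~1 together with the identification of order units with interior points of the cone, applied in $\Ve^*$ with the proper cone $Q^*$. Your additional bookkeeping (checking that $Q^*$ is proper and that $e_{B_e}=e$ via $\Ve=Q-Q$) just makes explicit what the paper leaves implicit.
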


Let $B$ be a base of $Q$. The corresponding {base norm} in $\Ve$ is defined by
\[
\|v\|_B= \inf\{\lambda+\mu,\ v=\lambda b_1-\mu b_2, \lambda,\mu\ge 0,\ b_1,b_2\in B\}
\]
It is clear that $\|q\|_B=\<e_B,q\>$ for all $q\in Q$.
Let $\Ve_1$ be the unit ball of $\|\cdot\|_B$ in $\Ve$, then 
\[
\Ve_1=\{\lambda b_1-\mu b_2, \ b_1,b_2\in B, \lambda,\mu\ge 0, \lambda+\mu=1\}=co(B\cup -B)
\]
Let $\|\cdot\|_B^*$ be the dual norm in $\Ve^*$, then the unit ball $\Ve^*_1$ for $\|\cdot\|_B^*$  is given by
\[
\Ve_1^*=\Ve_1^\circ=(co(B\cup -B))^\circ=(B\cup -B)^\circ=B^\circ\cap (-B)^\circ
\]
where $A^\circ:=\{f\in \Ve^*,\ \<f,a\>\le 1, a\in A\}$ is the polar of  
$A\subset \Ve$, see \cite{rockafellar}. We have
\[
\Ve_1^*=\{f\in \Ve^*, -1\le \<f,b\>\le 1, b\in B\}=\{f\in \Ve^*,
 -e_B\le_{Q^*} f\le_{Q^*} e_B\}
\]
where $e_B$ is the order unit. Hence  the dual norm is given by
\[
\|f\|_B^*=\inf \{\lambda>0, -\lambda e_B\le_{Q^*} f\le_{Q^*} \lambda e_B\}=:\|f\|_{e_B}
\] 
In general, if  $e$ is an order unit, then $\|\cdot\|_e$ defines a norm  called
the {order unit norm} in $\Ve^*$. 

Since $\|\cdot\|_B$ is the dual norm for $\|\cdot\|_{e_B}$, we get for  $v\in \Ve$,
\begin{equation}\label{eq:basenorm}
\|v\|_B=\|v\|_{e_B}^*=\sup_{-e_B\le_{Q^*} f\le_{Q^*} e_B}\<f,v\>=
2\sup_{f\in Q^*, f \le_{Q^*} e_B}\<f,v\>-\<e_B,v\>
\end{equation}
where the last equality follows by replacing $f$ by $\tfrac12 (f+e_B)$. 
\begin{ex}\label{ex:states}
Let  $\Ve=B_h(\Ha)$ be the real vector space of self-adjoint elements in $B(\Ha)$ and
let $Q=B(\Ha)^+$. We identify $\Ve^*$ with $\Ve$, with duality 
$\<a,b\>=\Tr ab$, then $Q$ is a self-dual proper cone and $B=\Se(\Ha)$ is a base of $Q$, with $e_B=I$. The order unit norm
 $\|\cdot\|_{I}$ is the operator norm $\|\cdot\|$ in $B(\Ha)$ and its dual $\|\cdot \|_B$ is  the trace norm $\|\cdot\|_1$.

\end{ex}

We will finish this section by showing that the base norm is naturally  related to a distinguishability measure for
 elements of the base. By analogy with the set of quantum states, let us define a test on a base 
$B$ as an affine map 
$\mathbf t:B\to [0,1]$.
 It is easy to see that there is a one-to-one correspondence between tests on $B$ and elements  $f\le_{Q^*} e_B$ 
 in $Q^*$. Let $b_0,b_1$ be two elements of $B$ and let us interpret the value $\mathbf t(b)=\<f,b\>$
 as the probability of choosing $b_0$  if the ''true value'' is $b$. Then $\<f,b_1\>$ and $1-\<f,b_0\>$ are probabilities
of making an error. Let 
$\lambda\ge 0$, then we define the minimal average error probability as
 \[
\Pi^B_{\lambda}(b_0,b_1):=\min_{0\le_{Q^*} f\le_{Q^*} e_B} \lambda (1-\<f,b_0\>)+(1-\lambda)\<f,b_1\>
 \]
We obtain by (\ref{eq:basenorm}) that
\begin{eqnarray*}
\Pi^B_{\lambda}(b_0,b_1)&=&\lambda-\max_{0\le_{Q^*}f\le_{Q^*} e_B}\<f,\lambda b_0-(1-\lambda)b_1\>\\
&=& \frac 12(1 - \|\lambda b_0-(1-\lambda)b_1\|_B).
\end{eqnarray*}

\section{Base norms on subspaces of $B_h(\Ha)$}\label{sec:subops}

We now put $\Ve=B_h(\Ha)$, with the self-dual proper cone $B(\Ha)^+$ as in Example \ref{ex:states}. We will describe all possible bases of this cone.

It is clear that $int(B(\Ha)^+)$ is the set of positive definite elements, hence these are the order units in 
$B_h(\Ha)$.
By Corollary \ref{coro:base}, there is a one-to-one 
correspondence between  positive definite elements $b$ and bases  of $B(\Ha)^+$, given by
\begin{equation}\label{eq:bases}
b\leftrightarrow S_b:=\{a\in B(\Ha)^+, \Tr ab=1\}=B(\Ha)^+\cap \Te_b,
\end{equation}
where $\Te_b=\{x\in B_h(\Ha), \Tr xb=1\}$.
  For $b\in int(B(\Ha)^+)$, we have by (\ref{eq:basenorm}) and Example \ref{ex:states} that the corresponding base norm is
\begin{equation}\label{eq:basenorm_b}
\|x\|_{S_b}=\sup_{-b\le a\le b}\Tr ax=\sup_{-I\le a\le I} \Tr ab^{1/2}xb^{1/2}=\|b^{1/2}xb^{1/2}\|_1
\end{equation}
and the dual order unit norm is
\begin{equation}\label{eq:orderunitnorm_b}
\|x\|_b=\inf\{\lambda>0, -\lambda b\le x\le\lambda b\}=\|b^{-1/2}xb^{-1/2}\|.
\end{equation}
If $b\in B(\Ha)^+$ is any element, we define
\[
\|b^{-1/2}xb^{-1/2}\|:=\lim_{\varepsilon\to 0^+}\|(b+\varepsilon)^{-1/2}x(b+\varepsilon)^{-1/2}\|
\]
Note that the expression on the RHS is bounded for all $\varepsilon>0$ if and only if $\supp(x)\le \supp(b)$ and in this 
case the norm on the LHS is defined by restriction to the support of $b$. Otherwise, the limit is infinite.
Moreover, for any elements $a,b\in B(\Ha)^+$, we define
\[
D_{max}(a\| b):=\log\inf\{\lambda>0, a\le \lambda b\}=\inf\{\gamma>0, a\le 2^\gamma b\}
\]
For a pair of states $\rho$ and $\sigma$,  $D_{max}(\rho\| \sigma)$ is the
max-relative entropy of $\rho$ and $\sigma$, introduced in \cite{datta}.\footnote{Note that $D_{max}$ was denoted by $D_\infty$ in \cite{renner}.} If $b\in int(B(\Ha)^+)$, then
\[
D_{max}(a\| b)=\log (\|a\|_b)
\]
In general, if $\supp( a)\le \supp (b)$, then we may restrict to the support of $b$ and with this restriction
 $D_{max}(a\| b)=\log(\|a\|_b)$,
 otherwise $D_{max}(a\| b)=\infty$.

\subsection{Sections of a base of $B(\Ha)^+$}

Let $J\subset B_h(\Ha)$ be a  subspace and let $Q=J\cap B(\Ha)^+$ be the convex cone of positive elements in $J$.
It is obvious that $Q$ is closed and pointed. We will   suppose that $J$ is positively generated, then $J=Q-Q$ and 
   $Q$ is a proper cone in $J$. Let $b\in Q$ be 
such that $\supp a\le \supp b=:p$ for all $a\in Q$, then $J\subseteq B_h(p\Ha)$ and by restricting to $B_h(p\Ha)$, 
we may suppose that $b$ is positive definite. Conversely, if $J$ contains a positive definite element, then 
$J$ is positively generated.

Let $J^\perp=\{ y\in B_h(\Ha), \Tr xy=0,\ x\in J\}$, let $B_h(\Ha)|_{J^\perp}$ be the quotient space and let 
$\pi: B_h(\Ha)\to B_h(\Ha)|_{J^\perp}$ be
the quotient map $a\mapsto a+J^\perp$. We may identify the dual space $J^*$ with  
$B_h(\Ha)|_{J^\perp}$, with duality 
\[
\<x,\pi(a)\>=\Tr xa, \qquad x\in J,\ a\in B_h(\Ha).
\]
It was shown in \cite{ja} 
that the dual cone  of $Q$ is $Q^*=\pi(B(\Ha)^+)$, moreover, since $\pi$ is a linear map, we have 
$int(Q^*)=int(\pi(B(\Ha)^+))=\pi(int(B(\Ha)^+))$ by \cite{rockafellar}. In other words, any element $f\in Q^*$ has the form
\[
f(x)=\Tr ax,\qquad x\in J
\]
for some (in general non-unique) element $a\in B(\Ha)^+$ and $f$ is an order unit in $J^*$ if and only if $a$ may be chosen positive definite.   Now we can use Corollary \ref{coro:base} to describe all bases of $Q$.

\begin{lemma}\label{lemma:bases_of_Q} 
A subset $B\subset Q$ is a base of $Q$ if and only if  $B=J\cap S_{\tilde b}$, where $\tilde b\in int(B(\Ha)^+)$.
 In this case,  $\pi(\tilde b)=e_B$.

\end{lemma}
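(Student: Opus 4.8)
The plan is to derive Lemma~\ref{lemma:bases_of_Q} directly from Corollary~\ref{coro:base} and the description of $Q^*$ recalled just above. By Corollary~\ref{coro:base}, every base $B$ of the proper cone $Q$ in $J$ is of the form $B_e=\{q\in Q,\ \<e,q\>=1\}$ for a unique order unit $e\in J^*$, and conversely every such $e$ gives a base; moreover $e_{B_e}=e$. So it suffices to translate the two sides of this correspondence into the matrix picture, using that $J^*$ is identified with the quotient $B_h(\Ha)|_{J^\perp}$ via $\<x,\pi(a)\>=\Tr xa$.

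First I would take a base $B$ of $Q$ and let $e=e_B\in J^*$ be the associated order unit. As recalled above (from \cite{ja,rockafellar}), $int(Q^*)=\pi(int(B(\Ha)^+))$, so since $e$ is an order unit in $J^*$, i.e.\ $e\in int(Q^*)$, there is a positive definite $\tilde b\in int(B(\Ha)^+)$ with $\pi(\tilde b)=e$. Then for $q\in Q\subseteq J$ we have $\<e,q\>=\<\pi(\tilde b),q\>=\Tr \tilde b\,q$, hence
\[
B=B_e=\{q\in Q,\ \Tr \tilde b\,q=1\}=J\cap\{a\in B(\Ha)^+,\ \Tr a\tilde b=1\}=J\cap S_{\tilde b},
\]
using the definition (\ref{eq:bases}) of $S_{\tilde b}$. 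This also shows $e_B=\pi(\tilde b)$, which is the last assertion. Conversely, given any positive definite $\tilde b$, the functional $\pi(\tilde b)$ lies in $\pi(int(B(\Ha)^+))=int(Q^*)$, hence is an order unit in $J^*$, and the same computation shows its associated base $B_{\pi(\tilde b)}$ equals $J\cap S_{\tilde b}$; so every set of this form is indeed a base of $Q$.

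I do not expect a serious obstacle here: the statement is essentially a dictionary entry. The one point that needs a word of care is the non-uniqueness of $\tilde b$ — distinct positive definite operators can have the same image under $\pi$ and hence define the same base — so the lemma only claims existence of \emph{some} such $\tilde b$, and correspondingly I would not assert uniqueness of $\tilde b$, only (via Corollary~\ref{coro:base}) uniqueness of the order unit $e_B=\pi(\tilde b)$. The only input beyond the elementary correspondence is the identification $int(Q^*)=\pi(int(B(\Ha)^+))$, which is quoted from \cite{ja} and \cite{rockafellar} in the paragraph preceding the lemma and may be used freely.
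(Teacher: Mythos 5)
Your proof is correct and follows essentially the same route as the paper's: both identify $e_B$ as an element of $int(Q^*)=\pi(int(B(\Ha)^+))$ to produce a positive definite $\tilde b$ with $\pi(\tilde b)=e_B$, and then compute $B=Q\cap\Te_{\tilde b}=J\cap S_{\tilde b}$, with the converse following from Corollary~\ref{coro:base}. Your version merely spells out the converse and the non-uniqueness of $\tilde b$ a little more explicitly than the paper does.
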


\begin{proof} Let $B$ be a base of $Q$. Since $e_B\in int(Q^*)$, there is some $\tilde b\in int(B(\Ha)^+)$ such that
$e_B=\pi(\tilde b)$ and 
\[
B=\{q\in Q, \Tr q\tilde b=\<e_B,q\>=1\>\}=Q\cap \Te_{\tilde b}=J\cap S_{\tilde b}
\]
(see (\ref{eq:bases})).
 Conversely, it is quite clear that $B=J\cap S_{\tilde b}$  is a base of $Q$ and $e_B=\pi(\tilde b)$.
 
\end{proof}

A set of the form $B=L\cap S_{\tilde b}$ where $\tilde b\in int(B(\Ha)^+)$ and $L\subseteq B_h(\Ha)$ is a subspace will be called a 
  section of a base of $B(\Ha)^+$, or simply a section. Let $J=\mathrm{span}(B)$ be the real linear span of $B$, then 
\[
B\subseteq J\cap S_{\tilde b}\subseteq L\cap S_{\tilde b}=B,
\]
so that $B=J\cap S_{\tilde b}$ and $B$ is a base of $Q=J\cap B(\Ha)^+$.
If moreover $B$  contains a positive definite element, 
we say that $B$ is a faithful section. In this case, we have 
$B\cap int(B(\Ha)^+)=ri(B)$, where $ri(B)$ denotes the relative interior  of $B$. Indeed, since 
$B=L_{\tilde b}\cap B(\Ha)^+$, where $L_{\tilde b}=:L\cap \Te_{\tilde b}$ is an affine subspace and
$L_{\tilde b}\cap int(B(\Ha)^+)\ne \emptyset$, we have by \cite[Theorem 6.5]{rockafellar} that
\[
ri(B)= ri(L_{\tilde b})\cap ri(B(\Ha)^+)=L_{\tilde b}\cap int(B(\Ha)^+)=B\cap int(B(\Ha)^+)
\]
For example, note that if $B=\{b\}$ for some $b\in B(\Ha)^+$, then $B$ is a section and $B$ is  faithful  if and only if $b$  is positive definite. If a section $B$ is not faithful, then there is some element $b\in B$ such that  $p=\supp (b)$ and $B\subset B(p\Ha)$. Then $B$ is 
a faithful section of a base of $B(p\Ha)^+$, in this case, $ri(B)=B\cap ri(B(p\Ha)^+)$. 
From now on, we will suppose that  $B$ is a faithful section and we put 
$J:=\mathrm{span}(B)$, $Q:=J\cap B(\Ha)^+$.

Note that in Lemma \ref{lemma:bases_of_Q},  the correspondence between the base $B$ and the element $\tilde b$ 
 such that $B=J\cap S_{\tilde b}$ is not one-to-one, since the order 
unit $e_B=\pi(\tilde b)$  may  contain more different positive definite elements.
We will now look at the set of all such elements.
 Let  
 \[
 \tilde B:=\{\tilde b\in B(\Ha)^+, \Tr b\tilde b=1, b\in B \}
 \]
 Note that
\begin{equation}\label{eq:tildeb} 
\tilde B= \pi^{-1}(e_B)\cap B(\Ha)^+=(\tilde b+J^\perp)\cap B(\Ha)^+, 
\end{equation}
where $\tilde b$ is any element in $\tilde B$.  Let $\tilde J:=\mathrm{span}(\tilde B)$ and 
$\tilde Q:=\tilde J\cap B(\Ha)^+$. Note that
 $\tilde B$ always contains a positive definite element, so that $\tilde J$  is positively generated and  $\tilde Q$ is a  proper cone in $\tilde J$. Since by (\ref{eq:tildeb}) $\tilde B$ is an intersection of $B(\Ha)^+$ by an affine 
subspace, we have 
\[
\{\tilde b\in int(B(\Ha)^+),\ B=J\cap S_{\tilde b}\}=\tilde B\cap int(B(\Ha)^+)=ri(\tilde B).
\]

\begin{lemma}\label{lemma:B_tilde_B} Let $b\in ri(B)$, $\tilde b\in ri(\tilde B)$. Then
\begin{enumerate}

\item[(i)]  $\tilde B=\tilde J\cap S_b$, so that $\tilde B$ is a faithful section of a base of $B(\Ha)^+$.

 \item[(ii)]  $\tilde{\tilde B}=B$.

\item [(iii)] $B=\bigcap_{\tilde b'\in ri(\tilde B)} S_{\tilde b'}$.

\end{enumerate}
\end{lemma}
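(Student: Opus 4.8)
The plan is to prove the three parts in order, using Lemma \ref{lemma:bases_of_Q} and the symmetry between $B$ and $\tilde B$ that is already visible in the definitions. Throughout, fix $b\in ri(B)$ and $\tilde b\in ri(\tilde B)$; by the discussion before the lemma both are positive definite, and $B=J\cap S_{\tilde b}$, while $\tilde B=(\tilde b+J^\perp)\cap B(\Ha)^+$ by (\ref{eq:tildeb}).

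For part (i), the key observation is that $J^\perp$ plays the role for $\tilde B$ that $J$ plays for $B$. Indeed $\tilde J=\mathrm{span}(\tilde B)$, and since $\tilde B=(\tilde b+J^\perp)\cap B(\Ha)^+$ with $\tilde b\in int(B(\Ha)^+)$, every element of $\tilde B$ differs from $\tilde b$ by an element of $J^\perp$; hence $\tilde J\subseteq \mathbb{R}\tilde b + J^\perp$. Now for any $\tilde b'\in\tilde B$ we have $\Tr b(\tilde b'-\tilde b)=0$ because $b\in J$ and $\tilde b'-\tilde b\in J^\perp$, and $\Tr b\tilde b=1$ by definition of $\tilde B$; so $\Tr b\tilde b'=1$, i.e. $\tilde B\subseteq S_b$, and therefore $\tilde B\subseteq \tilde J\cap S_b$. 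Conversely, if $x\in \tilde J\cap S_b$ then $x\in B(\Ha)^+$ and, writing $x=\mu\tilde b + y$ with $y\in J^\perp$ (using $\tilde J\subseteq\mathbb{R}\tilde b+J^\perp$), the condition $\Tr bx=1$ together with $\Tr by=0$ and $\Tr b\tilde b=1$ forces $\mu=1$, so $x\in(\tilde b+J^\perp)\cap B(\Ha)^+=\tilde B$. Thus $\tilde B=\tilde J\cap S_b$; since $\tilde B$ contains a positive definite element it is a faithful section, and by Lemma \ref{lemma:bases_of_Q} applied in $\tilde J$ we get $e_{\tilde B}=\pi'(b)$ where $\pi'$ is the quotient map modulo $\tilde J^\perp$.

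Part (ii) is then essentially symmetry. By definition $\tilde{\tilde B}=\{c\in B(\Ha)^+, \Tr c\tilde b'=1,\ \tilde b'\in\tilde B\}$; by part (i) and (\ref{eq:tildeb}) (applied to the section $\tilde B$ with span $\tilde J$) this equals $(b+\tilde J^\perp)\cap B(\Ha)^+$. So I must show $(b+\tilde J^\perp)\cap B(\Ha)^+ = B = (\tilde b+J^\perp)\cap B(\Ha)^+\cap\ldots$; more precisely, $B=J\cap S_{\tilde b}$ and I want to identify this with $(b+\tilde J^\perp)\cap B(\Ha)^+$. The inclusion $B\subseteq\tilde{\tilde B}$ is immediate from the definition of $\tilde B$ (every element of $B$ pairs to $1$ with every element of $\tilde B$). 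For the reverse inclusion the real content is $\tilde{\tilde J}:=\mathrm{span}(\tilde{\tilde B})=J$, equivalently that taking $\perp$ twice inside the relevant affine structure returns $J^\perp$'s complement correctly; since we are in finite dimensions $J^{\perp\perp}=J$, and one checks $\tilde J^\perp \subseteq$ the translation space of the affine hull of $B$, namely $\tilde J^\perp\subseteq J$ — this follows because $J^\perp\subseteq\tilde J$ (as $\tilde J\supseteq J^\perp$ by the first paragraph of the argument for (i)), so dualizing gives $\tilde J^\perp\subseteq (J^\perp)^\perp=J$. Then $(b+\tilde J^\perp)\cap B(\Ha)^+\subseteq (b+J)\cap B(\Ha)^+ = J\cap B(\Ha)^+ = Q$, and since every such element pairs to $1$ with $\tilde b$ (because $b$ does and $\tilde J^\perp\perp\tilde b$), it lies in $Q\cap S_{\tilde b}=B$. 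Hence $\tilde{\tilde B}=B$.

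Part (iii): the inclusion $B\subseteq\bigcap_{\tilde b'\in ri(\tilde B)}S_{\tilde b'}$ is immediate from the definition of $\tilde B$ (each $b\in B$ satisfies $\Tr b\tilde b'=1$ for all $\tilde b'\in\tilde B\supseteq ri(\tilde B)$). For the converse, suppose $a\in B(\Ha)^+$ with $\Tr a\tilde b'=1$ for all $\tilde b'\in ri(\tilde B)$. Since $ri(\tilde B)$ affinely spans $\tilde J\cap\Te_b$ (its affine hull), the linear functional $y\mapsto\Tr ay$ agrees with $y\mapsto \Tr by$ on a set affinely spanning that hyperplane of $\tilde J$; taking differences of two points of $ri(\tilde B)$ (which range over all of $\tilde J\cap J^\perp$-directions, i.e. the translation space of $\tilde B$, call it $\tilde J_0$) gives $\Tr(a-b)z=0$ for all $z\in\tilde J_0$, and evaluating at one point of $ri(\tilde B)$ gives the affine normalization. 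Now $\tilde J_0$ contains $J^\perp$ — indeed $\tilde J_0 = \tilde J\cap(\text{translations of }\tilde B)$ and from (\ref{eq:tildeb}), $\tilde B-\tilde B\subseteq J^\perp$, while also $J^\perp\subseteq\tilde J$, so $\tilde J_0\supseteq J^\perp$; together with dimension count (since $\tilde B$ is a base of $\tilde Q$, $\dim\tilde J_0=\dim\tilde J-1$) one gets $\tilde J_0=J^\perp$ when $\tilde J=\mathbb{R}\tilde b+J^\perp$, which we established. Hence $\Tr(a-b)z=0$ for all $z\in J^\perp$, so $a-b\in J^{\perp\perp}=J$, giving $a\in b+J\subseteq J$, and since $a\ge 0$, $a\in Q$; the normalization $\Tr a\tilde b=1$ then places $a\in Q\cap S_{\tilde b}=B$.

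The main obstacle I expect is bookkeeping around the affine versus linear structure of $\tilde B$ in part (iii), and in particular pinning down that the translation space of $\tilde B$ is exactly $J^\perp$ — this rests on the identity $\tilde J=\mathbb{R}\tilde b+J^\perp$, which in turn needs the inclusion $J^\perp\subseteq\tilde J$. That inclusion is the one genuinely nonobvious point: it says $\tilde B$ is ``large enough'' inside its affine hull, and it follows from (\ref{eq:tildeb}) together with the fact that $\tilde b$ is an interior point of $B(\Ha)^+$, so that $\tilde b+J^\perp$ meets $B(\Ha)^+$ in a set with nonempty relative interior whose affine hull is all of $\tilde b+J^\perp$; hence $\tilde J-\tilde b\supseteq J^\perp$. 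Once that is in hand, parts (i)–(iii) are linear algebra plus the already-proven Lemma \ref{lemma:bases_of_Q}.
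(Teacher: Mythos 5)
Your proof is correct, and its core is the same as the paper's: everything rests on the identity $\tilde J=\mathbb{R}\tilde b+J^\perp$ and on passing to orthogonal complements. There are two small but genuine divergences. For (ii), the paper shows $\tilde{\tilde J}=J$ by a lattice computation and then invokes the fact that two bases of the same cone, one contained in the other, must coincide; you instead compute $\tilde{\tilde B}=(b+\tilde J^\perp)\cap B(\Ha)^+$ directly and land it inside $Q\cap S_{\tilde b}=B$ using only the inclusion $\tilde J^\perp\subseteq J$, which is slightly more economical since you never need the reverse inclusion $J\subseteq\tilde{\tilde J}$. For (iii), the paper's argument is a one-liner: continuity of $\tilde b'\mapsto\Tr a\tilde b'$ extends the hypothesis from $ri(\tilde B)$ to $cl(ri(\tilde B))=\tilde B$, and then (ii) applies; your linear-algebra argument via the translation space of $\tilde B$ is correct but longer, and you could replace it by that continuity step. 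Finally, a point in your favor: you explicitly isolate and prove the inclusion $J^\perp\subseteq\tilde J$ (using $\tilde b\in int(B(\Ha)^+)$ to perturb $\tilde b$ inside $\tilde b+J^\perp$), which the paper's proof of (ii) silently uses when it asserts $\tilde J=\mathrm{span}\{\tilde b\}\vee J^\perp$, even though its proof of (i) only establishes the inclusion $\tilde J\subseteq\mathrm{span}\{\tilde b\}\vee J^\perp$.
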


\begin{proof} (i) 
Since $\tilde B$ is  convex, any element $y\in \tilde J$ has the form
$y=\lambda \tilde b_1-\mu \tilde b_2$, with $\tilde b_1,\tilde b_2\in \tilde B$ and $\lambda,\mu\ge 0$. 
Hence by (\ref{eq:tildeb}), $y=(\lambda-\mu)\tilde b+ z$ for some $z\in J^\perp$. If $y\in \tilde J\cap S_b$, we must have
 $1=\Tr yb=\lambda-\mu$, so that $y\in (\tilde b+J^\perp)\cap B(\Ha)^+=\tilde B$. The opposite inclusion is obvious. 

 (ii) It is clear that $B\subset \tilde{\tilde B}$ and $\tilde{\tilde B}$ is a base of $\tilde {\tilde Q}=\tilde{\tilde J}\cap B(\Ha)^+$. Consequently, it is enough to prove that $\tilde{\tilde J}=J$, since then $B$ and $\tilde {\tilde B}$ are two bases of the same cone. By the proof of (i), $\tilde J= \mathrm{span} \{\tilde b\}\vee J^\perp$ and similarly,
\[
\tilde{\tilde J}=\mathrm{span}\{b\}\vee \tilde J^\perp=\mathrm{span}\{b\}\vee (\{\tilde b\}^\perp\wedge J)=(\mathrm{span}\{b\}\vee \{\tilde b\}^\perp)\wedge J=J,
\]
here the third equality follows from $b\in J$ and the last equality  follows from the fact that $\{\tilde b\}^\perp$ is a subspace of codimension 1 not containing $b$,  so that $\mathrm{span}\{b\}\vee \{\tilde b\}^\perp=B_h(\Ha)$.

(iii) It is clear that $B\subseteq \bigcap_{\tilde b'\in ri(\tilde B)} S_{\tilde b'}$. 
If $a\in \bigcap_{\tilde b'\in ri(\tilde B)} S_{\tilde b'}$, then $a$ is a positive element such that 
$\Tr a\tilde b'=1$ for all
$\tilde b'\in cl(ri(\tilde B))=\tilde B$, hence $a\in \tilde{\tilde B}=B$. 

\end{proof}

We call $\tilde B$ the dual section of $B$.
Since $B$ is a base of $Q$, the base norm $\|\cdot\|_B$ is defined in $J$.
Next we show that this norm can be naturally extended to $B_h(\Ha)$.  For this, let us  define 
\begin{equation}\label{eq:Oa}
\Oe_B:=\{ x\in B_h(\Ha), x=x_1-x_2,\ x_1,x_2\in B(\Ha)^+, x_1+x_2\in B\}
\end{equation}

For $b\in B(\Ha)^+$, we define $\Oe_b:=\Oe_{\b}$.

\begin{lemma}\label{lemma:base}   We have
 \begin{enumerate}
\item [(i)] $\Oe_B=\{ x\in B_h(\Ha), \exists b'\in B,\ -b'\le x\le b'\}=\bigcup_{b'\in B}\Oe_{b'}$
\item [(ii)] The unit ball of the base norm $\|\cdot\|_B$ in $J$ is $\Oe_B\cap J$.
 \end{enumerate}

\end{lemma}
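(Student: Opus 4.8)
The plan is to prove (i) first by a direct double inclusion, then deduce (ii) by comparing the set $\Oe_B\cap J$ with the description of the unit ball $\Ve_1$ obtained in Section \ref{sec:basen}.

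For (i), I would argue as follows. First note the second equality $\{x\in B_h(\Ha),\exists b'\in B,\ -b'\le x\le b'\}=\bigcup_{b'\in B}\Oe_{b'}$ is just unwinding the definition of $\Oe_{b'}=\Oe_{\{b'\}}$: by (\ref{eq:Oa}) applied to the base $\{b'\}$, $x\in\Oe_{b'}$ iff $x=x_1-x_2$ with $x_1,x_2\ge0$ and $x_1+x_2=b'$, and such a decomposition exists precisely when $-b'\le x\le b'$ (take $x_1=\tfrac12(b'+x)$, $x_2=\tfrac12(b'-x)$ for one direction, and $x_1+x_2=b'$, $x=x_1-x_2$ with $x_1,x_2\ge0$ forces $-b'\le x\le b'$ for the other). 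So the real content is the first equality $\Oe_B=\bigcup_{b'\in B}\Oe_{b'}$. The inclusion $\supseteq$ is immediate: if $-b'\le x\le b'$ for some $b'\in B$, then the decomposition $x=\tfrac12(b'+x)-\tfrac12(b'-x)$ has both summands positive and summing to $b'\in B$, so $x\in\Oe_B$. For $\subseteq$, suppose $x=x_1-x_2$ with $x_1,x_2\ge0$ and $x_1+x_2=:b'\in B$; then $x_1\le b'$ and $x_2\le b'$ give $-b'\le x\le b'$ with $b'\in B$, so $x$ lies in the right-hand side.

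For (ii), recall from Section \ref{sec:basen} that the unit ball of $\|\cdot\|_B$ in $J$ is $\Ve_1=co(B\cup-B)=\{\lambda b_1-\mu b_2:\ b_1,b_2\in B,\ \lambda,\mu\ge0,\ \lambda+\mu=1\}$, where here $\Ve=J$. I would show $\Oe_B\cap J=\Ve_1$ by two inclusions. For $\Ve_1\subseteq\Oe_B\cap J$: given $x=\lambda b_1-\mu b_2\in J$ with $\lambda+\mu=1$, set $x_1=\lambda b_1$, $x_2=\mu b_2$; these are positive and $x_1+x_2=\lambda b_1+\mu b_2\in B$ since $B$ is convex and $\lambda+\mu=1$, hence $x\in\Oe_B$, and $x\in J$ by hypothesis. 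For $\Oe_B\cap J\subseteq\Ve_1$: if $x\in J$ and $x=x_1-x_2$ with $x_1,x_2\ge0$, $x_1+x_2\in B$, I want to write $x$ as a convex combination of an element of $B$ and an element of $-B$. The natural move is: if $x_1\ne0$ write $x_1=\lambda b_1$ with $b_1\in B$, $\lambda>0$ (using that $B$ is a base of $Q=J\cap B(\Ha)^+$ and $x_1=x+x_2$, $x_1\le x_1+x_2\in B\subset Q$, so $x_1\in Q$); similarly $x_2=\mu b_2$ with $b_2\in B$, $\mu\ge0$. Applying $e_B$ (equivalently $\langle e_B,\cdot\rangle=\Tr(\cdot\,\tilde b)$ for $\tilde b\in ri(\tilde B)$) to $x_1+x_2\in B$ gives $\lambda+\mu=1$, hence $x=\lambda b_1-\mu b_2\in\Ve_1$. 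The degenerate cases $x_1=0$ or $x_2=0$ are handled the same way with $\lambda=0$ or $\mu=0$.

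The only subtle point — and the one I'd be most careful about — is the claim that $x_1,x_2\in Q$, i.e. that they lie in $J$, not merely in $B_h(\Ha)^+$: we are given $x\in J$ and $x_1+x_2\in B\subset J$, so $x_1=\tfrac12\big((x_1+x_2)+x\big)\in J$ and $x_2=\tfrac12\big((x_1+x_2)-x\big)\in J$, whence $x_1,x_2\in J\cap B(\Ha)^+=Q$ and the base-decomposition of $Q$ applies. With that observation the argument is routine; no deeper machinery than the definition of a base and convexity of $B$ is needed, and part (ii) is really just the identification of $\Oe_B\cap J$ with $co(B\cup-B)$.
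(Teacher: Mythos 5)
Your proposal is correct and follows essentially the same route as the paper: part (i) via the decomposition $x_\pm=\tfrac12(b'\pm x)$ in both directions, and part (ii) by identifying $\Oe_B\cap J$ with $co(B\cup -B)$, writing $x_\pm=\lambda_\pm b_\pm$ in $Q=J\cap B(\Ha)^+$ and applying $e_B$ to $x_++x_-\in B$ to get $\lambda_++\lambda_-=1$. The ``subtle point'' you flag (that $x_1,x_2$ lie in $J$) is handled identically in the paper by observing $x_\pm=\tfrac12(b'\pm x)\in B(\Ha)^+\cap J$.
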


\begin{proof} 
(i) Let $x=x_1-x_2$ with $x_1+x_2=b'\in B$, then $-b'=-(x_1+x_2)\le x\le x_1+x_2=b'$. 
Conversely, let $-b'\le x\le b'$ for some $b'\in B$. Put  $x_\pm=1/2(b'\pm x)$, then $x_\pm\in B(\Ha)^+$, 
$x_+-x_-=x$ and $x_++x_-=b'\in B$.

(ii) By definiton, the unit ball of $\|\cdot\|_B$ in $J$  is the set of elements of 
$J$ of the form $x=\lambda b_1-(1-\lambda)b_2$, $b_1,b_2\in B$, $0\le \lambda\le 1$. Then clearly $x\in \Oe_B$, 
by putting
 $x_1=\lambda b_1$ and $x_2=(1-\lambda)b_2$. 
Conversely, let $x\in J$ be such that $-b'\le x\le b'$ for some $b'\in B$, and put $x_\pm=1/2(b'\pm x)$, 
then $x_\pm\in B(\Ha)^+\cap J=Q$. Let $x_\pm=\lambda_\pm b_\pm$, for $\lambda_\pm\ge 0$, $b_\pm\in B$,
then by applying the corresponding order unit $e_B$ to the equality $b'=x_++x_-$, we see that
 we must have $\lambda_++\lambda_-=1$, so that $\|x\|_B\le 1$.

\end{proof}

\begin{thm}\label{thm:norm_B_and_dual} Let $B$ be a faithful section and let $\tilde B$ be the dual section. Then $\Oe_B$ is the unit ball of a norm in $B_h(\Ha)$. The unit ball of the dual norm is $\Oe_{\tilde B}$.

\end{thm}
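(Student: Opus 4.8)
The plan is to show that $\Oe_B$ is a bounded, closed, convex, absorbing, symmetric set containing $0$ in its interior, hence the unit ball of a norm on $B_h(\Ha)$; then to identify the polar $\Oe_B^\circ$ with $\Oe_{\tilde B}$, which gives the dual norm. Symmetry ($\Oe_B = -\Oe_B$) and convexity are immediate from the representation in (\ref{eq:Oa}): if $x = x_1 - x_2$ and $y = y_1 - y_2$ with $x_1+x_2, y_1+y_2 \in B$, then a convex combination $tx + (1-t)y$ is decomposed as $(tx_1 + (1-t)y_1) - (tx_2 + (1-t)y_2)$, and the sum of the positive parts is $t(x_1+x_2) + (1-t)(y_1+y_2) \in B$ since $B$ is convex. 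Using the characterization in Lemma~\ref{lemma:base}(i), $\Oe_B = \bigcup_{b' \in B}\Oe_{b'}$; since $B$ is faithful it contains a positive definite $b$, and $\Oe_b$ is (by (\ref{eq:basenorm_b})) the unit ball of $\|\cdot\|_{S_b}$, a genuine norm on all of $B_h(\Ha)$, so $\Oe_B \supseteq \Oe_b$ contains a neighborhood of $0$; boundedness follows because $B$ is compact, hence contained in $\{a \ge 0 : \Tr a \le c\}$ for some $c$, which forces $\|x\|_1 \le c$ for $x \in \Oe_B$. Closedness follows from compactness of $B$ together with closedness of $B(\Ha)^+$: a limit of $x^{(n)} = x_1^{(n)} - x_2^{(n)}$ with $x_1^{(n)} + x_2^{(n)} = b^{(n)} \in B$ has, after passing to a subsequence, $b^{(n)} \to b' \in B$ and $0 \le x_i^{(n)} \le b^{(n)}$ bounded, so $x_i^{(n)} \to x_i \ge 0$ with $x_1 + x_2 = b'$.

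For the dual norm, since $\Oe_B$ is a closed symmetric convex body, the dual unit ball is its polar $\Oe_B^\circ = \{y \in B_h(\Ha) : \Tr xy \le 1 \ \forall x \in \Oe_B\}$. The key computation is: by Lemma~\ref{lemma:base}(i), $\sup_{x \in \Oe_B}\Tr xy = \sup_{b' \in B}\sup_{-b' \le x \le b'}\Tr xy = \sup_{b' \in B}\|y\|_{S_{b'}}^* $-type expression; more directly, $\sup_{-b'\le x\le b'}\Tr xy$ equals the order-unit norm quantity, and $\Tr xy \le 1$ for all such $x$ and all $b' \in B$ iff $-b' \le_{} $ ... iff $y$ satisfies $\Tr b' y' \ge 0$ appropriately. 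I would instead argue via the biduality already set up in Section~\ref{sec:basen}: the norm with unit ball $\Oe_B$ restricts on $J$ to $\|\cdot\|_B$ by Lemma~\ref{lemma:base}(ii), and one checks $\Oe_B^\circ = \{y : -1 \le \Tr b'y \le 1 \ \forall b' \in B\}$. Now $\Tr b' y$ ranges over an interval as $b'$ ranges over $B$; writing $y_\pm = \tfrac12(\|y\| \cdot(\text{something}) \pm y)$ is not available since $B$ need not have $I$ as order unit. Instead, I claim $y \in \Oe_B^\circ$ iff there exists $\tilde b' \in \tilde B$ with $-\tilde b' \le y \le \tilde b'$, i.e. $y \in \Oe_{\tilde B}$ by Lemma~\ref{lemma:base}(i) applied to $\tilde B$ (which is faithful by Lemma~\ref{lemma:B_tilde_B}(i)).

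The substantive step — and the main obstacle — is proving that equivalence: $\{y : -1 \le \Tr b'y \le 1 \ \forall b' \in B\} = \Oe_{\tilde B}$. The inclusion $\supseteq$ is easy: if $-\tilde b' \le y \le \tilde b'$ with $\tilde b' \in \tilde B$, then for $b' \in B$, $|\Tr b' y| \le \Tr b' \tilde b' = 1$. For $\subseteq$, suppose $|\Tr b' y| \le 1$ for all $b' \in B$; I want positive operators $y_1, y_2$ with $y_1 - y_2 = y$ and $y_1 + y_2 \in \tilde B$, equivalently (by Lemma~\ref{lemma:base}(i)) some $\tilde b' \in \tilde B$ with $-\tilde b' \le y \le \tilde b'$, equivalently $\tilde b' - y \ge 0$ and $\tilde b' + y \ge 0$ for some $\tilde b'$ in the affine slice $(\tilde b + J^\perp) \cap B(\Ha)^+$. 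This is a feasibility problem that I would attack by a separation/duality argument: consider the convex set $\{(\tilde b' - y, \tilde b' + y) : \tilde b' \in \tilde b + J^\perp\}$ and show it meets $B(\Ha)^+ \times B(\Ha)^+$; if not, a separating hyperplane produces $a_1, a_2 \in B(\Ha)^+$, not both zero, with $\Tr a_1(\tilde b' - y) + \Tr a_2(\tilde b' + y) \le 0$ for all $\tilde b' \in \tilde b + J^\perp$, forcing $\pi(a_1 + a_2) = e_B \cdot(\text{const})$ hence $a_1 + a_2$ is (a multiple of) an element of $B$ after normalization, while $\Tr(a_1 - a_2)y \ge \Tr(a_1+a_2)\tilde b = (\text{that const})$, contradicting $|\Tr b'y| \le 1$. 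Making the normalization and the handling of the degenerate cases ($a_1+a_2$ possibly zero, or $y$ not in $J$ at all) precise is where the real work lies; once the equivalence is established, Theorem~\ref{thm:norm_B_and_dual} follows, and I note that the symmetric roles of $B$ and $\tilde B$ via Lemma~\ref{lemma:B_tilde_B}(ii) make the argument self-dual.
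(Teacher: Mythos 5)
Your first half (convexity, symmetry, closedness from compactness of $B$, boundedness via $\Tr b'\le c$, absorption via a positive definite $b\in B$) is correct and matches the paper's argument, apart from a harmless mislabeling: $\Oe_b=\{x:-b\le x\le b\}$ is the unit ball of the \emph{order unit} norm $\|\cdot\|_b$ of (\ref{eq:orderunitnorm_b}), not of the base norm $\|\cdot\|_{S_b}$; this does not affect the conclusion that $\Oe_B$ absorbs.

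The duality half has a genuine gap, and it begins earlier than the step you flag. The identity you propose to ``check,'' namely $\Oe_B^\circ=\{y:-1\le\Tr b'y\le1\ \forall b'\in B\}=(B\cup-B)^\circ$, is false whenever $J=\mathrm{span}(B)$ is a proper subspace of $B_h(\Ha)$: the set $\Oe_B$ consists of \emph{all} differences $x_1-x_2$ with $x_1,x_2\ge0$ and $x_1+x_2\in B$, and such differences need not lie in $J$, so $\Oe_B$ strictly contains $co(B\cup-B)$ and its polar is strictly smaller than $(B\cup-B)^\circ$. Concretely, take $\Ha=\mathbb C^2$ and $B=\{\tfrac12 I\}$: then $\Oe_B=\{x:\|x\|\le\tfrac12\}$, so $\Oe_B^\circ=\{y:\|y\|_1\le2\}=\Oe_{\tilde B}$ as the theorem asserts, whereas $(B\cup-B)^\circ=\{y:|\Tr y|\le2\}$ is unbounded. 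Consequently the equality $\{y:|\Tr b'y|\le1\ \forall b'\in B\}=\Oe_{\tilde B}$ that your separation argument is aimed at is not true, and no amount of care with the degenerate cases will rescue it. The condition that must be dualized is $\Tr(x_1-x_2)y\le1$ over all \emph{pairs} $x_1,x_2\ge0$ with $x_1+x_2\in B$, not merely over $b'\in B\cup-B$.

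Your instinct that the remaining content is a feasibility/extension problem for the affine set $(\tilde b+J^\perp)\cap B(\Ha)^+$ is the right one, but the paper resolves it differently and more cleanly: it passes to $\Ha_2=\Ha\oplus\Ha$, where the set of admissible pairs $\{x_1\oplus x_2: x_i\ge0,\ x_1+x_2\in B\}$ becomes a base $B_2$ of the cone $Q_2=J_2\cap B(\Ha_2)^+$ with $J_2=\Phi^{-1}(J)$, $\Phi(a\oplus b)=a+b$. Then $a\in\Oe_B^\circ$ reads $\pi_2(a\oplus-a)\le_{Q_2^*}e_{B_2}$, and the known description $Q_2^*=\pi_2(B(\Ha_2)^+)$ from \cite{ja} (which packages exactly the closedness and normalization issues you acknowledge leaving open) produces $x\in J^\perp$ with $\pm a\le\tilde b+x\in\tilde B$, i.e.\ $a\in\Oe_{\tilde B}$. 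As written, your proposal both relies on a false intermediate identity and defers the substantive separation argument, so it does not constitute a proof.
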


We will denote this norm by $\|\cdot\|_B$, note that Lemma \ref{lemma:base} (ii) justifies this notation.

\begin{proof}  It is clear that  $\Oe_B$ is convex and symmetric, that is, $-\Oe_B\subseteq \Oe_B$. 
Since $B$ is compact, $\Oe_B$  is  closed. If $x\in \Oe_B$, then  $x=x_1-x_2$ with
 $x_1,x_2\ge 0$, $x_1+x_2\in B$ and by (\ref{eq:basenorm_b}), 
 \[ 
\|x\|_{S_{\tilde b}}\le \|x_1\|_{S_{\tilde b}}+\|x_2\|_{S_{\tilde b}}=\Tr( x_1+x_2) \tilde b = 1
\]
for any $\tilde b\in ri(\tilde B)$, hence $\Oe_B$ is bounded. Moreover, since $b\in ri(B)$ is an order unit,
 for every $x\in B_h(\Ha)$ there is some $t>0$ such that $-tb\le x\le tb$, so that $\Oe_B$ is absorbing (see
 Lemma \ref{lemma:base} (i)). These facts imply that $\Oe_B$ is the unit ball of a norm.

To show duality of the norms $\|\cdot\|_B$ and $\|\cdot\|_{\tilde B}$, let $\Ha_2=\Ha\oplus \Ha$ and let 
$\Phi:B_h(\Ha_2)\to
B_h(\Ha)$ be the map defined by $\Phi(a\oplus b)= a+b$. 
Let $J_2=\Phi^{-1}(J)$, then $J_2$ is a subspace in $B_h(\Ha_2)$ and 
\[
J_2^\perp=\Phi^*(J^\perp)=\{ x\oplus x, \ x\in J^\perp\},
\]
 see \cite{ja}. Let 
 $\pi_2:B(\Ha_2)\to J_2^*=B(\Ha_2)|_{J_2^\perp}$ be the quotient map.

Let $\tilde b\in ri(\tilde B)$ and put $B_2=J_2\cap S_{\tilde b\oplus\tilde b}$. Then $B_2$  is a  base of $Q_2=J_2\cap B(\Ha_2)^+$ and it is clear that for $w_1,w_2\in B(\Ha)^+$, $w_1\oplus w_2\in B_2$ if and only if $w_1+w_2\in B$.
Let now $a\in B_h(\Ha)$, then $a\in \Oe_B^\circ$ 
if and only if $\Tr(a\oplus -a)w\le1$ for all  $w\in B_2$. Equivalently,
\[
\pi_2(a\oplus -a)\le_{Q_2^*} e_{B_2}=\pi_2(\tilde b\oplus\tilde b),
\]
 that is, there is  some $v\in J_2^\perp$ such that $a\oplus -a\le \tilde b\oplus\tilde b +v$. Since $v=x\oplus x$, $x\in J^\perp$,
 we obtain $\pm a\le \tilde b+x$. Note that we must have  $\tilde b+x\ge 0$: if $c$ is any element in $B(\Ha)^+$, then we have
 $\pm \Tr ca\le \Tr c(b+x)$, so that $\Tr c(b+x)$ cannot be negative. Hence $\pm a\le \tilde b+x\in \tilde B$, so that $a\in \Oe_{\tilde B}$, by Lemma \ref{lemma:base} (i).

\end{proof}

\begin{coro}\label{coro:norm_ext} Let $x\in B_h(\Ha)$. Then
\begin{enumerate}
\item[(i)] $\Oe_B=\bigcap_{\tilde b\in ri(\tilde B)} \Oe_{S_{\tilde b}}$,
\item[(ii)]$\|x\|_B=\sup_{\tilde b\in ri(\tilde B)}
\|x\|_{S_{\tilde b}}=\sup_{\tilde b\in \tilde B} \|\tilde b^{1/2}x\tilde b^{1/2}\|_1$
\item[(iii)] $\|x\|_B=\inf_{b\in ri(B)}\|x\|_b=\inf_{b\in B} \|b^{-1/2}xb^{-1/2}\|$.%!!
\end{enumerate}
\end{coro}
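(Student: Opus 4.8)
The plan is to derive all three parts from Theorem \ref{thm:norm_B_and_dual} together with the earlier formulas \eqref{eq:basenorm_b}--\eqref{eq:orderunitnorm_b} and Lemma \ref{lemma:base}. The central object is the unit ball $\Oe_B$, which Theorem \ref{thm:norm_B_and_dual} identifies as the unit ball of $\|\cdot\|_B$ on all of $B_h(\Ha)$, with dual unit ball $\Oe_{\tilde B}$. Applying Lemma \ref{lemma:base}(ii) to the section $S_{\tilde b}$ (for $\tilde b\in ri(\tilde B)$), whose span is all of $B_h(\Ha)$, tells us that $\Oe_{S_{\tilde b}}$ is exactly the unit ball of $\|\cdot\|_{S_{\tilde b}}$; by \eqref{eq:basenorm_b} this norm is $x\mapsto \|\tilde b^{1/2}x\tilde b^{1/2}\|_1$.

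For part (i), I would argue that $\Oe_B=\bigcap_{\tilde b\in ri(\tilde B)}\Oe_{S_{\tilde b}}$ by taking polars. By Lemma \ref{lemma:B_tilde_B}(iii) applied with the roles of $B$ and $\tilde B$ exchanged (using $\tilde{\tilde B}=B$ from (ii) of that lemma), we have $\tilde B=\bigcap_{b'\in ri(B)}S_{b'}$ and symmetrically the relevant identity $B \cup (-B)$-type statement; more directly, $\Oe_{\tilde B}=co\big(\bigcup_{\tilde b\in ri(\tilde B)}(\{\tilde b\}\cup\{-\tilde b\})\big)$ closed up, and polars turn unions into intersections. Concretely: $\Oe_{\tilde B}^\circ=\Oe_B$ by the duality in Theorem \ref{thm:norm_B_and_dual}, and $\Oe_{\tilde B}=cl\,co\big(\bigcup_{\tilde b\in ri(\tilde B)}\Oe_{\b}\cap(\text{stuff})\big)$; since $ri(\tilde B)$ is dense in $\tilde B$ and $\Oe_{\tilde B}=\bigcup_{\tilde b'\in\tilde B}\Oe_{\tilde b'}$ by Lemma \ref{lemma:base}(i), taking polars gives $\Oe_B=\Oe_{\tilde B}^\circ=\big(\bigcup_{\tilde b}\Oe_{\tilde b}\big)^\circ=\bigcap_{\tilde b}\Oe_{\tilde b}^\circ$, and $\Oe_{\tilde b}^\circ=\Oe_{S_{\tilde b}}$ because $\{\tilde b\}$ and $S_{\tilde b}$ are dual sections of each other (the dual section of the singleton $\{\tilde b\}$ is $S_{\tilde b}$ when $\tilde b$ is positive definite — this is the content of \eqref{eq:basenorm_b}/\eqref{eq:orderunitnorm_b} being dual norms).

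Part (ii) is then immediate: $\|x\|_B\le 1 \iff x\in\Oe_B \iff x\in\Oe_{S_{\tilde b}}$ for all $\tilde b\in ri(\tilde B) \iff \|x\|_{S_{\tilde b}}\le 1$ for all such $\tilde b$, i.e. $\|x\|_B=\sup_{\tilde b\in ri(\tilde B)}\|x\|_{S_{\tilde b}}$. Rescaling and applying \eqref{eq:basenorm_b} turns this into $\sup_{\tilde b\in ri(\tilde B)}\|\tilde b^{1/2}x\tilde b^{1/2}\|_1$, and since $\tilde b\mapsto\|\tilde b^{1/2}x\tilde b^{1/2}\|_1$ is continuous and $\tilde B=cl(ri(\tilde B))$, the supremum over $ri(\tilde B)$ equals the supremum over $\tilde B$. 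Part (iii) is the dual statement, obtained by the same reasoning applied to the dual norm: $\|x\|_B = \|x\|_{\tilde B}^* $ played against Lemma \ref{lemma:B_tilde_B}(ii) ($\tilde{\tilde B}=B$) — more precisely, by the symmetry $B\leftrightarrow\tilde B$, part (ii) applied to $\|\cdot\|_{\tilde B}$ gives $\|x\|_{\tilde B}=\sup_{b\in B}\|b^{1/2}xb^{1/2}\|_1$; but I actually want $\|x\|_B$ as an infimum, so instead I would use that $\Oe_B=\bigcap_{b\in ri(B)}$ of something, or directly: $\|x\|_B\le\|x\|_b$ for every $b\in ri(B)$ since $B=J\cap S_b\subseteq S_b$ forces $\Oe_b\subseteq\Oe_B$, hence $\inf_{b\in ri(B)}\|x\|_b\ge\|x\|_B$; for the reverse, given $x\in\Oe_B$ write $x=x_1-x_2$ with $x_1+x_2\in B$, set $b=x_1+x_2\in B$ (approximate by $ri(B)$ if needed), and note $-b\le x\le b$ gives $\|x\|_b\le 1$. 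Combined with \eqref{eq:orderunitnorm_b} this is part (iii).

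The main obstacle is the polarity bookkeeping in part (i): one must be careful that $\Oe_{S_{\tilde b}}^\circ$, computed inside $B_h(\Ha)$, really equals $\Oe_{\tilde b}$ (the unit ball of the order-unit norm $\|\cdot\|_{\tilde b}$), which rests on \eqref{eq:basenorm_b} and \eqref{eq:orderunitnorm_b} being a dual pair of norms on $B_h(\Ha)$ — true, but it deserves an explicit sentence — and that passing from $ri(\tilde B)$ to $\tilde B$ under the union/closure commutes with taking polars (this uses compactness of $\tilde B$ and that polars of a set and its closed convex hull coincide, as recorded in the excerpt via \cite{rockafellar}). A secondary subtlety is that in part (iii) the element $b=x_1+x_2$ may lie on the boundary of $B$ rather than in $ri(B)$; I would handle this by a standard perturbation $b_\varepsilon=(1-\varepsilon)b+\varepsilon b_0$ with $b_0\in ri(B)$ fixed, noting $-b_\varepsilon\le x\le b_\varepsilon$ still holds for the adjusted decomposition and $\|x\|_{b_\varepsilon}\to\|x\|_b\le 1$, which suffices for the infimum.
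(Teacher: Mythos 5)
Your argument is essentially the paper's own proof: part (i) is obtained by polarity from Theorem \ref{thm:norm_B_and_dual} together with Lemma \ref{lemma:base}(i) and the duality $\Oe_{\tilde b}^\circ=\Oe_{S_{\tilde b}}$ coming from \eqref{eq:basenorm_b}--\eqref{eq:orderunitnorm_b}, and (ii)--(iii) follow by computing the gauge of the resulting descriptions of the unit ball, with the same $ri(B)$-versus-$B$ closure issue resolved by the same convex perturbation the paper uses for $\Oe_B=cl(\bigcup_{b\in ri(B)}\Oe_b)$. One cosmetic slip: in (iii) the relation $-b_\varepsilon\le x\le b_\varepsilon$ need not hold literally for $b_\varepsilon=(1-\varepsilon)b+\varepsilon b_0$, but $b_\varepsilon\ge(1-\varepsilon)b$ yields $\|x\|_{b_\varepsilon}\le(1-\varepsilon)^{-1}\|x\|_b$, which is all the infimum requires.
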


\begin{proof} (i) It is easy to see from  Lemma \ref{lemma:base} that  
\begin{equation}\label{eq:ob}
\Oe_{ B}=\bigcup_{ b\in  B}\Oe_{ b}=cl(\bigcup_{ b\in ri( B)}\Oe_{ b}).
\end{equation}
Indeed, let $x\in B_h(\Ha)$ be such that $-b\le x\le b$ for some $b\in B$ and let $b'\in ri(B)$, then 
$b_\epsilon:=\epsilon b'+(1-\epsilon)b\in ri(B)$ for all $0<\epsilon<1$. Let $x'\in \Oe_{b'}$ be any element,
 then $x_\epsilon:= \epsilon x'+(1-\epsilon)x\in \Oe_{b_\epsilon}$ and $x=\lim_{\epsilon\to 0^+}x_\epsilon \in
cl(\bigcup_{b\in ri(B)}\Oe_b)$.

Since $A^\circ=(cl(conv(A)))^\circ$ for any subset $A\in B_h(\Ha)$ containing 0, we obtain by Theorem 
\ref{thm:norm_B_and_dual} that
\[
\Oe_B=\Oe_{\tilde B}^\circ= (\bigcup_{\tilde b\in ri(\tilde B)}\Oe_{\tilde b})^\circ=\bigcap_{\tilde b\in ri(\tilde B)} \Oe_{\tilde b}^\circ=\bigcap_{\tilde b\in ri(\tilde B)} \Oe_{S_{\tilde b}} 
\] 

(ii) Since $\Oe_B$ is the unit ball of $\|\cdot\|_B$, we get from (i) 
\begin{eqnarray*}
\|x\|_B&=&\inf\{\lambda>0, x\in \lambda \Oe_B\}=\inf\{\lambda>0, x\in \lambda \Oe_{S_{\tilde b}}, \forall \tilde b\in 
ri( \tilde B)\}\\
&=& \inf\{\lambda>0, \lambda\ge \|x\|_{S_{\tilde b}}, \forall \tilde b\in ri(\tilde B)\}=\sup_{\tilde b\in ri(\tilde B)}
\|x\|_{S_{\tilde b}}=\sup_{\tilde b\in \tilde B}\|\tilde b^{1/2}x\tilde b^{1/2}\|_1,
\end{eqnarray*}
the last equality follows from (\ref{eq:basenorm_b}) and continuity of the norm $\|\cdot\|_1$.

(iii) On the other hand, we get from Lemma \ref{lemma:base} and (\ref{eq:ob})
\begin{eqnarray*}
\|x\|_B&=&\inf\{\lambda>0, x\in \lambda \Oe_B\}=\inf\{\lambda>0, x\in \lambda \cup_{b\in ri(B)} \Oe_b\}\\
&=& \inf_{b\in ri(B)}\ \inf\{\lambda>0, x\in \lambda \Oe_b\}=\inf_{b\in ri(B)} \|x\|_b=\inf_{b\in B}\|b^{-1/2}xb^{-1/2}\|
\end{eqnarray*}
where the last equality follows by (\ref{eq:orderunitnorm_b}).

\end{proof}

\begin{coro} \label{coro:positive} For $a\in B(\Ha)^+$, we have
\[ 
\|a\|_B=\sup_{\tilde b\in \tilde B} \Tr  a\tilde b =\inf_{b\in B}2^{D_{max}(a\|b)}
\]
\end{coro}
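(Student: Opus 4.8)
The plan is to derive Corollary~\ref{coro:positive} directly from the formulas for $\|x\|_B$ established in Corollary~\ref{coro:norm_ext}, specialized to a positive element $a\in B(\Ha)^+$. For the first equality, I would start from Corollary~\ref{coro:norm_ext}(ii), namely $\|a\|_B=\sup_{\tilde b\in\tilde B}\|\tilde b^{1/2}a\tilde b^{1/2}\|_1$, and observe that since $a\ge 0$ the operator $\tilde b^{1/2}a\tilde b^{1/2}$ is positive, so its trace norm equals its trace, $\Tr\tilde b^{1/2}a\tilde b^{1/2}=\Tr a\tilde b$. This immediately gives $\|a\|_B=\sup_{\tilde b\in\tilde B}\Tr a\tilde b$. (Alternatively one can argue from the base-norm characterization: for $a\in Q$, $\|a\|_B=\<e_B,a\>=\Tr a\tilde b$ for any $\tilde b\in ri(\tilde B)$, and then note this value is constant over $ri(\tilde B)$ and extends to all of $\tilde B$ by continuity; but going through (ii) is cleaner and needs no case distinction about whether $a\in Q$.)

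For the second equality, I would use Corollary~\ref{coro:norm_ext}(iii), $\|a\|_B=\inf_{b\in B}\|b^{-1/2}ab^{-1/2}\|$, together with the identity recorded just before Section~2.1: for $a,b\in B(\Ha)^+$ one has $D_{max}(a\|b)=\log\|a\|_b=\log\|b^{-1/2}ab^{-1/2}\|$ when $\supp(a)\le\supp(b)$, and $D_{max}(a\|b)=\infty=\|b^{-1/2}ab^{-1/2}\|$ otherwise. Hence $\|b^{-1/2}ab^{-1/2}\|=2^{D_{max}(a\|b)}$ in all cases (with the convention $2^{\infty}=\infty$), and taking the infimum over $b\in B$ yields $\|a\|_B=\inf_{b\in B}2^{D_{max}(a\|b)}$, since $t\mapsto 2^t$ is monotone so the infimum commutes with the exponential.

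The only genuinely delicate point is the handling of the non-faithful directions: in Corollary~\ref{coro:norm_ext}(iii) the infimum is written over $b\in B$, but the definition $\|b^{-1/2}ab^{-1/2}\|$ for a non-definite $b$ is via the limiting/restriction procedure of Section~2, and one should check that this matches $2^{D_{max}(a\|b)}$ exactly (including both being $+\infty$ when the support condition fails). This is precisely what the displayed identities preceding Section~2.1 assert, so no new work is needed; I would simply cite them. Apart from that, the proof is a two-line substitution, and the main ``obstacle'' is merely being careful that the suprema and infima in Corollary~\ref{coro:norm_ext} are already over the right index sets ($\tilde B$ resp.\ $B$, not their relative interiors) — which Corollary~\ref{coro:norm_ext} itself has already arranged via continuity of $\|\cdot\|_1$ and equation~(\ref{eq:ob}).

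\begin{proof}
Let $a\in B(\Ha)^+$. By Corollary~\ref{coro:norm_ext}(ii), $\|a\|_B=\sup_{\tilde b\in\tilde B}\|\tilde b^{1/2}a\tilde b^{1/2}\|_1$. For each $\tilde b\in\tilde B$ the operator $\tilde b^{1/2}a\tilde b^{1/2}$ is positive, so $\|\tilde b^{1/2}a\tilde b^{1/2}\|_1=\Tr\tilde b^{1/2}a\tilde b^{1/2}=\Tr a\tilde b$, which gives the first equality. For the second, Corollary~\ref{coro:norm_ext}(iii) gives $\|a\|_B=\inf_{b\in B}\|b^{-1/2}ab^{-1/2}\|$, and by the identities stated before Section~\ref{sec:subops}.1 we have $\|b^{-1/2}ab^{-1/2}\|=2^{D_{max}(a\|b)}$ for every $b\in B(\Ha)^+$ (both sides being $+\infty$ exactly when $\supp(a)\not\le\supp(b)$). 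Since $t\mapsto 2^t$ is increasing, the infimum over $b\in B$ may be pulled outside the exponential, yielding $\|a\|_B=\inf_{b\in B}2^{D_{max}(a\|b)}$.
\end{proof}
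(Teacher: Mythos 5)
Your proof is correct and follows essentially the same route as the paper: the second equality is obtained exactly as in the paper (Corollary~\ref{coro:norm_ext}(iii) plus the definition of $D_{max}$, with the support/infinity convention handled as you describe). For the first equality the paper argues directly from the dual unit ball, writing $\|a\|_B=\sup_{x\in\Oe_{\tilde B}}\Tr ax$ and sandwiching $\Tr ax\le\Tr a\tilde b_x$ via the decomposition $x=x_1-x_2$ with $x_1+x_2=\tilde b_x\in\tilde B$, whereas you pass through Corollary~\ref{coro:norm_ext}(ii) and the observation that $\|\tilde b^{1/2}a\tilde b^{1/2}\|_1=\Tr a\tilde b$ for $a\ge 0$; both are immediate consequences of Theorem~\ref{thm:norm_B_and_dual}, so the difference is cosmetic.
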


\begin{proof}  We have
\[
\|a\|_B=\sup_{x\in \Oe_{\tilde B}} \Tr ax 
\]
Let $x\in \Oe_{\tilde B}$, then $x=x_1-x_2$, $x_1,x_2\in B(\Ha)^+$ and $x_1+x_2=:\tilde b_x\in \tilde B$, so that
\[
\Tr ax \le \Tr ax_1\le \Tr a\tilde b_x\le \sup_{\tilde b\in \tilde B}\Tr a\tilde b\le \sup_{y\in \Oe_{\tilde B}} \Tr ay=\|a\|_B
\]
Hence $\|a\|_B=\sup_{\tilde b\in \tilde B} \Tr a\tilde b$.
The second equality follows directly from Corollary \ref{coro:norm_ext} (iii) and the definition of $D_{max}$.

\end{proof}

We can also characterize the maximizer resp. minimizer in Corollary \ref{coro:positive}.

\begin{coro}\label{coro:minmaximizer} Let $a\in B(\Ha)^+$. 
\begin{enumerate}
\item[(i)] Let $\tilde b_0\in \tilde B$, then $\|a\|_B=\Tr a\tilde b_0$ if and only if there exists some $q\in Q$,
 such that $a\le q$ and $(q-a)\tilde b_0=0$. In this case, $q=\|a\|_Bb_0$ and  $\|a\|_B=
2^{D_{max}(a\|b_0)}$.
\item[(ii)] Let $b_0\in B$, then  $\|a\|_B=2^{D_{max}(a\|b_0)}$ if and only if there exists some $t>0$ and $\tilde b_0\in \tilde B$, such that
$a\le tb_0$ and $(tb_0-a)\tilde b_0=0$. In this case, $t=\|a\|_B=\Tr a\tilde b_0$.
\end{enumerate}

\end{coro}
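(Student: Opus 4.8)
The plan is to treat this as an optimality (complementary slackness) statement for the two dual optimization problems identified in Corollary \ref{coro:positive}, namely
\[
\|a\|_B=\sup_{\tilde b\in \tilde B}\Tr a\tilde b=\inf_{b\in B}2^{D_{max}(a\|b)}=\inf_{b\in B}\inf\{t>0,\ a\le tb\}.
\]
The quantity $\inf\{t>0,\ a\le tb\}$ over $b\in B$ can be rewritten as $\inf\{t>0,\ \exists b\in B,\ a\le tb\}=\inf\{t>0,\ \exists q\in Q,\ \<e_B,q\>=t,\ a\le q\}$, i.e. as $\inf\{\<e_B,q\>:q\in Q,\ q\ge a\}$, a linear program over the cone $Q$ with a linear objective, and its dual (in the sense of \eqref{eq:basenorm} and the duality of the base and order-unit norms) is the supremum of $\Tr a\tilde b$ over $\tilde b\in\tilde B$. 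So I expect the proof to run via weak duality (an easy chain of inequalities) together with the already-established equality of the two optimal values, and then the ``if and only if'' characterization is exactly the statement that a primal-dual pair is optimal precisely when the slackness term $\Tr(q-a)\tilde b_0$ vanishes.

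For part (i): one direction, suppose $q\in Q$ with $a\le q$ and $(q-a)\tilde b_0=0$. Then $\Tr a\tilde b_0=\Tr q\tilde b_0=\<e_B,q\>\ge \inf\{\<e_B,q'\>:q'\in Q,q'\ge a\}=\|a\|_B$ by Corollary \ref{coro:positive}; but also $\Tr a\tilde b_0\le\sup_{\tilde b\in\tilde B}\Tr a\tilde b=\|a\|_B$, so equality holds throughout. In particular $\<e_B,q\>=\|a\|_B$, and writing $q=\|a\|_B\,b_0$ with $b_0\in B$ (using that $\|a\|_B>0$ when $a\neq0$; the case $a=0$ is trivial) we get $a\le\|a\|_B b_0$, hence $\|a\|_B\ge 2^{D_{max}(a\|b_0)}\ge\inf_{b\in B}2^{D_{max}(a\|b)}=\|a\|_B$, so $\|a\|_B=2^{D_{max}(a\|b_0)}$. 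Conversely, suppose $\|a\|_B=\Tr a\tilde b_0$. Here I would invoke that the infimum $\inf\{\<e_B,q\>:q\in Q,\ q\ge a\}$ is attained — $Q$ is closed, the objective $\<e_B,\cdot\>$ is continuous and proper on $Q$, and the feasible set $\{q\in Q:q\ge a\}$ is nonempty (e.g.\ $a\le tb$ for large $t$, any $b\in ri(B)$) and the sublevel sets are compact because $e_B=\pi(\tilde b)\in int(Q^*)$ forces $\<e_B,\cdot\>$ to be coercive on $Q$. Pick an optimal $q$; then $\<e_B,q\>=\|a\|_B=\Tr a\tilde b_0=\Tr q\tilde b_0-\Tr(q-a)\tilde b_0$, and since $\Tr q\tilde b_0=\<e_B,q\>$ (because $\tilde b_0\in\tilde B$ represents $e_B$) we get $\Tr(q-a)\tilde b_0=0$; as $q-a\ge0$ and $\tilde b_0\ge0$ this forces $(q-a)\tilde b_0=0$.

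For part (ii): this is the mirror image, optimizing $b$ instead of $\tilde b$. If such $t$ and $\tilde b_0$ exist, then $t=\Tr tb_0\,\tilde b_0=\Tr a\tilde b_0+\Tr(tb_0-a)\tilde b_0=\Tr a\tilde b_0\le\|a\|_B$ (using $\Tr b_0\tilde b_0=1$ and the slackness condition), while $a\le tb_0$ gives $2^{D_{max}(a\|b_0)}\le t$, hence $\|a\|_B\le 2^{D_{max}(a\|b_0)}\le t\le\|a\|_B$, so equality holds and $t=\|a\|_B=\Tr a\tilde b_0$, and $\|a\|_B=2^{D_{max}(a\|b_0)}$. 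Conversely, if $\|a\|_B=2^{D_{max}(a\|b_0)}$, set $t:=\|a\|_B$ (when $a\ne0$; $a=0$ trivial), so $a\le tb_0$ by definition of $D_{max}$ and the fact that the infimum in $D_{max}$ is attained (closedness of $Q$). Choose $\tilde b_0\in\tilde B$ attaining $\sup_{\tilde b\in\tilde B}\Tr a\tilde b=\|a\|_B$ — the attainment here is by compactness of $\tilde B$ and continuity of $\tilde b\mapsto\Tr a\tilde b$. Then $\Tr(tb_0-a)\tilde b_0=t\Tr b_0\tilde b_0-\Tr a\tilde b_0=t-\|a\|_B=0$, and positivity of the two factors gives $(tb_0-a)\tilde b_0=0$.

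The main obstacle, and the only place requiring care beyond bookkeeping, is justifying attainment of the relevant infimum/supremum so that one genuinely has an optimal primal-dual pair to apply complementary slackness to: compactness of $B$ and $\tilde B$ handles the suprema immediately, while for the infimum defining $D_{max}$ (equivalently $\inf\{\<e_B,q\>:q\in Q,q\ge a\}$) one uses that $Q$ is closed and that $e_B\in int(Q^*)$ makes the objective coercive on $Q$, so the feasible set intersected with a sublevel set is compact and nonempty. Everything else is the two short inequality chains above plus the elementary fact that for positive semidefinite $x,y$ one has $\Tr xy=0$ iff $xy=0$.
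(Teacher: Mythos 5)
Your proof is correct and takes essentially the same route as the paper's: both directions reduce to the complementary-slackness argument built on the two dual expressions of Corollary \ref{coro:positive}, with the paper likewise producing the element $q$ in the converse of (i) as $\|a\|_B b_0$ for a minimizer $b_0$ of $2^{D_{max}(a\|\cdot)}$ and writing a general feasible $q$ as $s b_0$ in the other direction. Your justification of attainment of the infimum (coercivity of $\langle e_B,\cdot\rangle$ on $Q$) is in fact slightly more explicit than the paper's, which simply asserts that a minimizing $b_0\in B$ exists.
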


\begin{proof} (i) Let  $\tilde b_0\in \tilde B$ be such that $\|a\|_B=\Tr a\tilde b_0$. Let $b_0\in B$  be such that
$\|a\|_B=2^{D_{max}(a\|b_0)}$, in particular, $a\le \|a\|_Bb_0$. Put $q=\|a\|_Bb_0$, then $q-a\ge 0$ and
$\Tr (q-a)\tilde b_0=0$. Since also $\tilde b_0\ge 0$, it follows that $(q-a)\tilde b_0=0$.

Conversely, suppose $q\in Q$ satisfies $a\le q$ and $(q-a)\tilde b_0=0$. Then $q=sb_0$ for some $b_0\in B$, $s\ge 0$.
Since $a\le sb_0$, we have 
\[
\|a\|_B\le s=\Tr a\tilde b_0\le \|a\|_B,
\]
so that $\Tr a\tilde b_0=\|a\|_B=s=2^{D_{max}(a\|b_0)}$.

(ii) is proved similarly.

\end{proof}

\section{Generalized channels}\label{sec:genchans}

Let $B$ be a  section of a base of $B(\Ha)^+$. A generalized channel with respect to $B$ (or a $B$-channel) is a completely positive map 
$\Phi:B(\Ha)\to B(\Ka)$ such that
$\Phi(B)\subseteq \Se(\Ka)$. Let $X_\Phi$ be the Choi matrix of $\Phi$, then  $\Phi$ is a generalized channel with respect to $B$ if and only if  $X_\Phi\ge 0$ and
\[
1=\Tr \Phi(b)=\Tr \Tr_\Ha[(I\otimes b^{\mathsf T})X_\Phi]=\Tr(I\otimes b^{\mathsf T})X_\Phi=\Tr b^{\mathsf T}\Tr_\Ka X_\Phi
\]
for all $b\in B$. Let
$\Ce_B(\Ha,\Ka)$ denote  the set of Choi matrices of all generalized channels with respect to $B$, then
\[
\Ce_B(\Ha,\Ka)=\{X\in B(\Ka\otimes \Ha)^+, \Tr_\Ka X\in \tilde B^{\mathsf T}\}.
\]
 Let us remark
that if $B$ is a section, then  $B^{\mathsf T}:=\{b^{\mathsf T}, b\in B\}$ is a section as well, here $b^{\mathsf T}$ denotes the transpose of $b$. Moreover, $\widetilde{B^{\mathsf T}}=\tilde B^{\mathsf T}$.
Note also that we have 
\begin{equation}\label{eq:B_tilde}
\Ce_B(\Ha,\mathbb C)=\tilde B^{\mathsf T},
\end{equation}
so that, in particular, $\Ce_B(\Ha,\mathbb C)$ is a section.

\begin{prop}\label{prop:genchan} Let $B$ be  a faithful section of a base of $B(\Ha)^+$. Then
$\Ce_B(\Ha,\Ka)$ is a faithful section of a base of $B(\Ka\otimes \Ha)^+$ and $\widetilde{\Ce_B(\Ha,\Ka)}=
\{I_\Ka\otimes b^{\mathsf T},\ b\in B\}$.

\end{prop}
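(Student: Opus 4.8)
The plan is to identify the section structure of $\Ce_B(\Ha,\Ka)$ directly from its definition as $\{X\in B(\Ka\otimes\Ha)^+,\ \Tr_\Ka X\in\tilde B^{\mathsf T}\}$, and to compute the dual section. First I would write $\tilde B^{\mathsf T}$ as a section: since $B$ is a faithful section of a base of $B(\Ha)^+$, by Lemma~\ref{lemma:B_tilde_B} so is $\tilde B$, and as remarked after (\ref{eq:B_tilde}), transposition preserves sections and commutes with the tilde operation, so $\tilde B^{\mathsf T}=\widetilde{B^{\mathsf T}}$ is a faithful section. Pick $\tilde b\in ri(\tilde B)$; then $\tilde b^{\mathsf T}\in ri(\tilde B^{\mathsf T})$ and $\tilde B^{\mathsf T}=\tilde J^{\mathsf T}\cap S_{b^{\mathsf T}}$ for any $b\in ri(B)$ (using Lemma~\ref{lemma:B_tilde_B}(i)). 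The key observation is then that the condition $\Tr_\Ka X\in\tilde B^{\mathsf T}$ can be rewritten as a single linear section condition on $B(\Ka\otimes\Ha)^+$: namely, membership in $\tilde B^{\mathsf T}$ means $\Tr_\Ka X\ge0$ (automatic) together with the affine constraints defining $\tilde B^{\mathsf T}$ inside $B_h(\Ha)$, i.e. $\Tr_\Ka X$ lies in the affine subspace $\tilde b^{\mathsf T}+(J^{\mathsf T})^\perp$. Pulling back along the linear map $X\mapsto\Tr_\Ka X$, this says $X$ lies in a certain subspace $L\subseteq B_h(\Ka\otimes\Ha)$ and satisfies one normalization $\Tr(I_\Ka\otimes b^{\mathsf T})X=1$ for $b\in ri(B)$. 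Thus $\Ce_B(\Ha,\Ka)=L'\cap S_{I_\Ka\otimes b^{\mathsf T}}$ for an appropriate subspace $L'$, which is exactly the form of a section with reference element $\tilde b':=I_\Ka\otimes b^{\mathsf T}\in int(B(\Ka\otimes\Ha)^+)$ (positive definite since $b$ is). Faithfulness follows by exhibiting a positive definite element: take $\Phi$ to be, say, $a\mapsto\Tr(b^{\mathsf T}a)\,\sigma$ for a fixed positive definite state $\sigma\in\Se(\Ka)$ and $b\in ri(B)$ — its Choi matrix is $\sigma\otimes b$, not positive definite in general, so instead I would average/mix generalized channels to produce a Choi matrix of full support, or argue that since $B$ is faithful the set $\Ce_B$ contains channels whose Choi matrices span enough directions; concretely, $X=\frac1{\dim\Ha}I_\Ka\otimes b^{\mathsf T}$ need not be a valid Choi matrix, so I expect to need a genuine construction here.

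For the identification of the dual section $\widetilde{\Ce_B(\Ha,\Ka)}$, I would use the characterization (\ref{eq:tildeb}): $\widetilde{\Ce_B(\Ha,\Ka)}=(\tilde b'+J'^\perp)\cap B(\Ka\otimes\Ha)^+$ where $J'=\mathrm{span}(\Ce_B(\Ha,\Ka))$ and $\tilde b'=I_\Ka\otimes b^{\mathsf T}$ for $b\in ri(B)$. So the real content is to compute $J'^\perp$, equivalently $J'$. Using that $J'$ is the span of $\{X\ge0:\Tr_\Ka X\in\tilde B^{\mathsf T}\}$ and that $\tilde B^{\mathsf T}$ spans $\tilde J^{\mathsf T}$, one finds $J'=\{X\in B_h(\Ka\otimes\Ha):\Tr_\Ka X\in\tilde J^{\mathsf T}\}$ (the inclusion $\subseteq$ is clear; for $\supseteq$ one writes any such $X$ as a difference of positive operators in $J'$, adding a large multiple of $I_\Ka\otimes\tilde b^{\mathsf T}$ or similar to control the partial trace). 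Hence $J'^\perp=\{Y\in B_h(\Ka\otimes\Ha):\Tr YX=0\ \forall X,\ \Tr_\Ka X\in\tilde J^{\mathsf T}\}$; since $\Tr YX$ depends on $Y$ only through constraints dual to the partial-trace condition, a short computation gives $J'^\perp=\{I_\Ka\otimes z:z\in(\tilde J^{\mathsf T})^\perp=(J^{\mathsf T})^\perp\}=\{I_\Ka\otimes z^{\mathsf T}:z\in J^\perp\}$. Therefore
\[
\widetilde{\Ce_B(\Ha,\Ka)}=(I_\Ka\otimes b^{\mathsf T}+\{I_\Ka\otimes z^{\mathsf T}:z\in J^\perp\})\cap B(\Ka\otimes\Ha)^+
=\{I_\Ka\otimes(b+z)^{\mathsf T}:b+z\ge0,\ z\in J^\perp\}.
\]
Finally, by (\ref{eq:tildeb}) applied to $B$ itself, $\{b+z:z\in J^\perp,\ b+z\ge0\}=(\tilde{\tilde B\,})$... more precisely $(b+J^\perp)\cap B(\Ha)^+$; and one must note $b+z\in B(\Ha)^+$ with $b+z\in b+J^\perp$ need not itself lie in $B$ — but wait: $I_\Ka\otimes(b+z)^{\mathsf T}\ge0$ forces $(b+z)^{\mathsf T}\ge0$, and we need $\Tr((b+z)^{\mathsf T}\cdot)=1$ on... actually the condition for the tilde is $\Tr(I_\Ka\otimes(b+z)^{\mathsf T})X=1$ for all $X\in\Ce_B(\Ha,\Ka)$, i.e. $\Tr((b+z)^{\mathsf T}\Tr_\Ka X)=1$ for all such $X$, i.e. $\Tr((b+z)^{\mathsf T}\tilde b^{\mathsf T})=1$ for all $\tilde b\in\tilde B$, i.e. $b+z\in\tilde{\tilde B}=B$ by Lemma~\ref{lemma:B_tilde_B}(ii). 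Hence $\widetilde{\Ce_B(\Ha,\Ka)}=\{I_\Ka\otimes b^{\mathsf T}:b\in B\}$, as claimed.

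I expect the main obstacle to be the bookkeeping around partial traces and transposes: carefully verifying that $J'=\{X:\Tr_\Ka X\in\tilde J^{\mathsf T}\}$ (in particular the surjectivity needed for $\supseteq$, which requires writing a hermitian $X$ with prescribed partial trace as a difference of positive operators whose partial traces still lie in $\tilde B^{\mathsf T}$ up to scaling), and the dual computation $J'^\perp=\{I_\Ka\otimes z^{\mathsf T}:z\in J^\perp\}$, which hinges on the adjoint of $\Tr_\Ka$ being $z\mapsto I_\Ka\otimes z$ together with $\ker$ / image duality. Once $J'^\perp$ is pinned down, the rest is a direct application of (\ref{eq:tildeb}) and Lemma~\ref{lemma:B_tilde_B}(ii). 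A secondary point requiring care is faithfulness of $\Ce_B(\Ha,\Ka)$: I would produce an explicit generalized channel with positive definite Choi matrix, e.g. by taking a convex combination $\Phi=\sum_j p_j\Phi_j$ where each $\Phi_j(a)=\Tr(\tilde b_j^{\mathsf T}a)\sigma_j$ with $\tilde b_j\in ri(\tilde B)$ (so $\Phi_j(B)\subseteq\Se(\Ka)$) and the pairs $(\tilde b_j^{\mathsf T},\sigma_j)$ chosen so that $\sum_j p_j\sigma_j\otimes\tilde b_j^{\mathsf T}$ has full support on $\Ka\otimes\Ha$ — possible because $ri(\tilde B)$ spans $\tilde J^{\mathsf T}$ and $\tilde b^{\mathsf T}$ ranges over full-support operators, while $\sigma_j$ can be any states; alternatively invoke that $\Ce_B(\Ha,\Ka)$ contains $\tilde B^{\mathsf T}$-indexed families rich enough to span $J'$ and hence, being a section of a proper cone, has nonempty relative interior meeting the interior of $B(\Ka\otimes\Ha)^+$.
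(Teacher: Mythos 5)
Your overall strategy --- realize $\Ce_B(\Ha,\Ka)$ as a section by pulling the affine constraints back along $\Tr_\Ka$, and then compute its dual section from (\ref{eq:tildeb}) by determining $J'^\perp$ --- can be made to work, but the step where you compute $J'^\perp$ contains a genuine error. You assert $(\tilde J^{\mathsf T})^\perp=(J^{\mathsf T})^\perp$, i.e.\ $\tilde J^\perp=J^\perp$, which is false in general: by the proof of Lemma \ref{lemma:B_tilde_B}(ii), $\tilde J=\mathrm{span}\{\tilde b\}\vee J^\perp$, so $\tilde J^\perp=\{\tilde b\}^\perp\cap J$, a hyperplane in $J$, not $J^\perp$. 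Already for $B=\Se(\Ha)$ one has $J^\perp=\{0\}$ while $\tilde J^\perp$ is the space of traceless hermitian operators, so your formula $J'^\perp=\{I_\Ka\otimes z^{\mathsf T}:z\in J^\perp\}=\{0\}$ would give $\widetilde{\Ce(\Ha,\Ka)}=\{I_\Ka\otimes \rho_0\}$, a single point, instead of $\{I_\Ka\otimes\rho:\rho\in\Se(\Ha)\}$. The closing ``filtering'' computation does not repair this, because (\ref{eq:tildeb}) asserts an equality, not an inclusion to be trimmed: with the wrong $J'^\perp$ your candidate set is already too small, and testing its members against the defining condition cannot recover the missing elements (indeed, for $z\in J^\perp$ the condition $b+z\in B\subset J$ forces $z=0$). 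With the correct $J'^\perp=\{I_\Ka\otimes z^{\mathsf T}:z\in\tilde J^\perp\}$ the parametrization gives $\{I_\Ka\otimes(b+z)^{\mathsf T}:\ z\in J,\ \Tr z\tilde b=0,\ b+z\ge 0\}=\{I_\Ka\otimes b'^{\mathsf T}:b'\in B\}$ directly. (Your identification $J'=\{X:\Tr_\Ka X\in\tilde J^{\mathsf T}\}$ is correct, but the Jordan-decomposition-plus-shift argument you sketch does not quite work, since $\Tr_\Ka X_\pm$ need not lie in $\tilde J^{\mathsf T}$; use instead $X=\tfrac12(\lambda X_0+X)-\tfrac12(\lambda X_0-X)$ with $X_0$ a positive definite element of $\Ce_B(\Ha,\Ka)$ and $\lambda$ large.)

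Two further points. For faithfulness you leave the construction open; note that $a\mapsto\Tr(b^{\mathsf T}a)\sigma$ with $b\in ri(B)$ is generally not a $B$-channel (one needs $\Tr(b^{\mathsf T}b')=1$ for all $b'\in B$, i.e.\ $b^{\mathsf T}\in\tilde B$), and a correct single example is $\Phi(a)=\Tr(\tilde b a)\,\sigma$ with $\tilde b\in ri(\tilde B)$ and $\sigma$ a positive definite state, whose Choi matrix $\sigma\otimes\tilde b^{\mathsf T}$ is positive definite because $\tilde B$ is a faithful section; no mixing is needed. More importantly, the paper's proof avoids all of these computations: $I_\Ka\otimes B^{\mathsf T}$ is itself a faithful section, and $\Ce_B(\Ha,\Ka)=\{X\ge 0:\ \Tr X(I_\Ka\otimes b^{\mathsf T})=1,\ \forall b\in B\}=\widetilde{I_\Ka\otimes B^{\mathsf T}}$; then Lemma \ref{lemma:B_tilde_B}(i) shows this dual section is a faithful section, and (ii) gives $\widetilde{\Ce_B(\Ha,\Ka)}=\widetilde{\widetilde{I_\Ka\otimes B^{\mathsf T}}}=I_\Ka\otimes B^{\mathsf T}$. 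You were essentially re-proving a special case of that lemma by hand, which is exactly where the transpose and orthogonal-complement bookkeeping went astray.
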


\begin{proof}
It is easy to see that $I_\Ka\otimes B^{\mathsf T}=\{I_\Ka \otimes b^{\mathsf T}, b\in B\}$ is a faithful section of a base of $B(\Ka\otimes \Ha)^+$ and
\[
\Ce_B(\Ha,\Ka)=\{ X\in B(\Ka\otimes \Ha)^+, \Tr X(I\otimes b^{\mathsf T})=1, \forall b\in B\}=\widetilde{I_\Ka\otimes B^{\mathsf T}}.
\]
 The proof now follows by Lemma \ref{lemma:B_tilde_B} (i) and 
(ii).

\end{proof}

Let now $X\in B_h(\Ka\otimes \Ha)$ and let $\Phi:B(\Ha)\to B(\Ka)$ be the corresponding Hermitian map. By Corollary 
\ref{coro:norm_ext} and Proposition \ref{prop:genchan},
\[
\|X\|_{\Ce_B(\Ha,\Ka)} = \sup_{b\in B } \|(I\otimes (b^{\mathsf T})^{1/2})X(I\otimes (b^{\mathsf T})^{1/2})\|_1
\]
and we have
\[
(I\otimes (b^{\mathsf T})^{1/2})X(I\otimes (b^{\mathsf T})^{1/2})=(\Phi\otimes id_\Ha)(\sigma_b),
\]
where $\sigma_b=|\psi_b\>\<\psi_b|$, with 
\[
|\psi_b\>=\sum_i |i\>\otimes (b^{\mathsf T})^{1/2}|i\>=\sum_i b^{1/2} |i\>\otimes |i\>\in \Ha\otimes \Ha.
\]
Hence $\sigma_b\in B(\Ha\otimes\Ha)^+$ and $\Tr_1\sigma_b=b^{\mathsf T}\in B^{\mathsf T}$, so that $\sigma_b\in 
\Ce_{\tilde B}(\Ha,\Ha)$. Conversely, if $\sigma=|\varphi\>\<\varphi|\in \Ce_{\tilde B}(\Ha,\Le)$ for some Hilbert space $\Le$, 
then there is some 
linear map $R:\Ha\to \Le$ satisfying $R^*R=b\in B$ and such that $|\varphi\>=\sum_i R|i\>\otimes|i\>$. Let $U:\Ha\to\Le$
 be an isometry such that $R=Ub^{1/2}$, then 
\[
|\varphi\>=\sum_i R|i\>\otimes |i\>=\sum_i Ub^{1/2}|i\>\otimes |i\>=(U\otimes I)|\psi_b\>
\]

\begin{thm}\label{thm:norm_genchan} Let $X\in B_h(\Ka\otimes\Ha)$ and let $\Phi$ be the corresponding Hermitian map 
$B(\Ha)\to B(\Ka)$. Let $\Le$ be any Hilbert space with  $\dim(\Le)=\dim(\Ha)$.
Then
\begin{align*}
\|X\|_{\Ce_B(\Ha,\Ka)}&=\sup_{\dim(\Le')<\infty} \
\sup_{\sigma\in \Ce_{\tilde B}(\Ha,\Le')} \|(\Phi\otimes id_{\Le'})(\sigma)\|_1 \\
&=   \sup_{\sigma\in \Ce_{\tilde B}(\Ha,\Le)}\|(\Phi\otimes id_{\Le})(\sigma)\|_1\\
\end{align*}
and the dual norm is $\|X\|_{\Ce_B(\Ha,\Ka)}^*=\|X\|_{I_\Ka\otimes B^{\mathsf T}}$.
Moreover, if $X\ge 0$ then
\[
\|X\|_{\Ce_B(\Ha,\Ka)}= \sup_{b\in B} \Tr \Phi(b)= \inf_{Y\in \Ce_B(\Ha,\Ka)} 2^{D_{max}(X\|Y)}
\]
and
\[
\|X\|_{I\otimes B^{\mathsf T}}=\inf_{b\in B} 2^{D_{max}(X\| I\otimes b^{\mathsf T})}
= \sup_{Y\in \Ce_B(\Ha,\Ka)}\Tr XY=\sup_{S}\<\psi| X_{S^*\circ \Phi}|\psi\>,
\]
where the last supremum is taken over the set of all $B$-channels $B(\Ha)\to B(\Ka)$.

\end{thm}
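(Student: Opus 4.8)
The plan is to derive each of the displayed formulas in Theorem~\ref{thm:norm_genchan} by specializing the general results of Section~\ref{sec:subops} to the faithful section $B=\Ce_B(\Ha,\Ka)$ of a base of $B(\Ka\otimes\Ha)^+$, using Proposition~\ref{prop:genchan} to identify its dual section as $\widetilde{\Ce_B(\Ha,\Ka)}=I_\Ka\otimes B^{\mathsf T}$. First I would observe that the chain of formulas for $\|X\|_{\Ce_B(\Ha,\Ka)}$ in terms of a supremum over $\sigma$ is essentially already assembled in the paragraph preceding the theorem: Corollary~\ref{coro:norm_ext}(ii) gives $\|X\|_{\Ce_B(\Ha,\Ka)}=\sup_{b\in B}\|(I\otimes(b^{\mathsf T})^{1/2})X(I\otimes(b^{\mathsf T})^{1/2})\|_1$, the displayed identity rewrites the operator inside as $(\Phi\otimes id_\Ha)(\sigma_b)$ with $\sigma_b\in\Ce_{\tilde B}(\Ha,\Ha)$, and the converse computation shows every pure $\sigma\in\Ce_{\tilde B}(\Ha,\Le)$ is of the form $(U\otimes I)|\psi_b\rangle\langle\psi_b|(U\otimes I)^*$ for an isometry $U$. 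Thus I would argue: for the lower bound, any pure $\sigma\in\Ce_{\tilde B}(\Ha,\Le')$ satisfies $\|(\Phi\otimes id_{\Le'})(\sigma)\|_1=\|(\Phi\otimes id_\Ha)(\sigma_b)\|_1\le\|X\|_{\Ce_B(\Ha,\Ka)}$ by unitary invariance of $\|\cdot\|_1$; for general $\sigma\in\Ce_{\tilde B}(\Ha,\Le')$, decompose $\sigma$ as a convex combination (or non-normalized sum) of pure elements of $\Ce_{\tilde B}$ — here I need that $\Ce_{\tilde B}(\Ha,\Le')$ is the base whose extreme points can be taken pure, or more simply use that $\sigma=\sum_j\sigma_j$ with $\Tr_1\sigma_j=b_j$ and $\sum_j b_j\in B$, apply triangle inequality and the bound $\|x\|_{\Ce_B}\le\sup$ over a single $b$, invoking Corollary~\ref{coro:norm_ext}(ii) again with the convex structure of $B$. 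The reverse inequality is immediate since each $\sigma_b$ is an admissible $\sigma$. The restriction to $\dim(\Le)=\dim(\Ha)$ follows because the optimal $\sigma_b$ lives on $\Ha\otimes\Ha$.

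Next, for the dual norm, Theorem~\ref{thm:norm_B_and_dual} says the dual norm of $\|\cdot\|_{\Ce_B(\Ha,\Ka)}$ is $\|\cdot\|_{\widetilde{\Ce_B(\Ha,\Ka)}}$, and Proposition~\ref{prop:genchan} identifies $\widetilde{\Ce_B(\Ha,\Ka)}=I_\Ka\otimes B^{\mathsf T}$, giving $\|X\|^*_{\Ce_B(\Ha,\Ka)}=\|X\|_{I_\Ka\otimes B^{\mathsf T}}$ directly. For $X\ge 0$, the formula $\|X\|_{\Ce_B(\Ha,\Ka)}=\sup_{b\in B}\Tr\Phi(b)=\inf_{Y\in\Ce_B(\Ha,\Ka)}2^{D_{max}(X\|Y)}$ is exactly Corollary~\ref{coro:positive} applied to this section, once I note that for $\tilde Y=I_\Ka\otimes b^{\mathsf T}\in\widetilde{\Ce_B(\Ha,\Ka)}$ we have $\Tr X(I_\Ka\otimes b^{\mathsf T})=\Tr\Tr_\Ha[(I\otimes b^{\mathsf T})X]=\Tr\Phi(b)$ by the Choi formula~(\ref{eq:choirep}) — this is the same computation that appears at the start of Section~\ref{sec:genchans}. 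Similarly the $I\otimes B^{\mathsf T}$-norm formulas for $X\ge 0$: $\|X\|_{I\otimes B^{\mathsf T}}=\inf_{b\in B}2^{D_{max}(X\|I\otimes b^{\mathsf T})}=\sup_{Y\in\Ce_B(\Ha,\Ka)}\Tr XY$ follow from Corollary~\ref{coro:positive} applied to the section $I_\Ka\otimes B^{\mathsf T}$, whose dual section is $\Ce_B(\Ha,\Ka)$ (again Proposition~\ref{prop:genchan} together with Lemma~\ref{lemma:B_tilde_B}(ii), $\widetilde{\widetilde{\Ce_B}}=\Ce_B$).

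The one genuinely new piece is the last equality $\|X\|_{I\otimes B^{\mathsf T}}=\sup_S\langle\psi|X_{S^*\circ\Phi}|\psi\rangle$, where $S$ ranges over $B$-channels $B(\Ha)\to B(\Ka)$ and $S^*$ is the adjoint map $B(\Ka)\to B(\Ha)$. Here I would unfold the definition: for a $B$-channel $S$ with Choi matrix $Y=X_S\in\Ce_B(\Ha,\Ka)$, the composition $S^*\circ\Phi$ has $\langle\psi|X_{S^*\circ\Phi}|\psi\rangle=\Tr[(S^*\circ\Phi\otimes id_\Ha)(\Psi)\,\Psi]$, and a standard Choi-calculus identity (the ``link product'' / trace-duality relation, e.g. $\langle\psi|X_{S^*\circ\Phi}|\psi\rangle=\Tr X_\Phi X_S=\Tr XY$) reduces this to $\Tr XY$; then taking the supremum over $S$ matches $\sup_{Y\in\Ce_B(\Ha,\Ka)}\Tr XY$, which we already identified with $\|X\|_{I\otimes B^{\mathsf T}}$. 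I expect the main obstacle to be getting this Choi-composition identity stated and signed correctly — in particular pinning down that $\Tr X_{S^*\circ\Phi}\Psi=\Tr X_\Phi X_S$ with the transposes in~(\ref{eq:choirep}) handled consistently, and checking that as $S$ ranges over all $B$-channels, $X_S$ ranges over all of $\Ce_B(\Ha,\Ka)$ (this is just the definition of $\Ce_B(\Ha,\Ka)$, so it is fine). The rest is bookkeeping with the already-established Corollaries~\ref{coro:norm_ext}, \ref{coro:positive} and Proposition~\ref{prop:genchan}.
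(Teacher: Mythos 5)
Most of your proposal tracks the paper's proof: the reduction to $\sup_{b\in B}\|(I\otimes(b^{\mathsf T})^{1/2})X(I\otimes(b^{\mathsf T})^{1/2})\|_1$ via Corollary \ref{coro:norm_ext}, the identification of pure elements of $\Ce_{\tilde B}(\Ha,\Le)$ with the $\sigma_b$'s up to an isometry, the dual norm via Theorem \ref{thm:norm_B_and_dual} and Proposition \ref{prop:genchan}, the positive case via Corollary \ref{coro:positive}, and the final Choi-calculus identity $\Tr XY=\Tr X(S\otimes id)(\Psi)=\Tr(S^*\otimes id)(X)\Psi=\<\psi|X_{S^*\circ\Phi}|\psi\>$ are all exactly as in the paper. (The Schmidt-decomposition step that reduces a large ancilla $\Le'$ to $\dim(\Le)=\dim(\Ha)$ for \emph{pure} $\sigma$ also needs to be spelled out, as in the paper's adaptation of Watrous, but you gesture at it correctly.)

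The genuine gap is your treatment of mixed $\sigma\in\Ce_{\tilde B}(\Ha,\Le')$. Neither of your two suggested routes works. If you decompose $\sigma=\sum_j\sigma_j$ into rank-one pieces, the marginals $\Tr_{\Le'}\sigma_j$ sum to an element of $B^{\mathsf T}$ but are individually \emph{not} in the cone $\mathbb R_+B^{\mathsf T}$ (they need not even lie in $\mathrm{span}(B)^{\mathsf T}$), so the pieces are not admissible inputs and the triangle inequality gives you nothing you can bound by $\|X\|_{\Ce_B(\Ha,\Ka)}$. Already for $B=\Ce(\Ha_0,\Ha_1)$ this fails: a rank-one summand of a state with maximally mixed marginal generally does not itself have marginal proportional to $I$. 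Nor can you fall back on extreme points: the extreme points of $\Ce_{\tilde B}(\Ha,\Le')$ are not rank one for general sections $B$ (this is the same phenomenon as extremal channels having Choi rank larger than one). The paper's proof uses a different idea that you are missing: given $Y\in\Ce_{\tilde B}(\Ha,\Le')$ with Kraus operators $V_i$ satisfying $\sum_iV_i^*V_i\in B$, it forms $V=\sum_jV_j\otimes|f_j\>$ into an \emph{enlarged} ancilla $\Le'\otimes\Le_0'$, producing a rank-one element $Z\in\Ce_{\tilde B}(\Ha,\Le'\otimes\Le_0')$ with $Y=\Tr_{\Le_0'}Z$ (admissibility is preserved because $V^*V=\sum_iV_i^*V_i\in B$), and then uses monotonicity of $\|\cdot\|_1$ under partial trace to get $\|(\Phi\otimes id_{\Le'})(Y)\|_1\le\|(\Phi\otimes id_{\Le'\otimes\Le_0'})(Z)\|_1\le\|X\|_{\Ce_B(\Ha,\Ka)}$. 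This purification step is also the reason the supremum over all finite-dimensional $\Le'$ appears in the statement and must then be brought back down to $\dim(\Le)=\dim(\Ha)$ by the Schmidt argument; without it the equality between the mixed-input and pure-input suprema is unproven.
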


{\it Proof.} From what was said above, it is easy to see that
\[
\|X\|_{\Ce_B(\Ha,\Ka)}=  \sup_{|\varphi\>\<\varphi|\in \Ce_{\tilde B}(\Ha,\Le)}\|(\Phi\otimes id_{\Le})(|\varphi\>\<\varphi|)\|_1
\]
with $\dim(\Le)=\dim(\Ha)$. We will show that
\[
\sup_{|\varphi\>\<\varphi|\in \Ce_{\tilde B}(\Ha,\Le')}\|(\Phi\otimes id_{\Le'})(|\varphi\>\<\varphi|)\|_1\le
\sup_{|\varphi\>\<\varphi|\in \Ce_{\tilde B}(\Ha,\Le)}\|(\Phi\otimes id_{\Le})(|\varphi\>\<\varphi|)\|_1
\]
whenever $\dim(\Le')\ge \dim(\Le)$. The proof is almost the same as the proof of \cite[Theorem 5]{watrous}, we 
include it here for completeness.

So let $\dim(\Le')\ge \dim(\Le)=\dim(\Ha)$, then there is some $\varphi_0\in  \Ha\otimes\Le'$,
with $|\varphi_0\>\<\varphi_0|\in \Ce_{\tilde B}(\Ha,\Le')$
 such that 
\[
\sup_{|\varphi\>\<\varphi|\in \Ce_{\tilde B}(\Ha,\Le')} \|\Phi\otimes id_{\Le'}(|\varphi\>\<\varphi|)\|_1=
\|\Phi\otimes id_{\Le'}(|\varphi_0\>\<\varphi_0|)\|_1
\]
Let $|\varphi_0\>=\sum_{i=1}^m s_i|\varphi_i\>\otimes |\xi_i\>$ be the Schmidt decomposition of $\varphi_0$, with 
$\{|\varphi_i\>\}$ and $\{|\xi_i\>\}$ orthonormal sets in $\Ha$ resp. $\Le'$ and $m=\dim(\Ha)$. Then 
$|\varphi_0\>\<\varphi_0|=\sum_{i,j} |\varphi_i\>\<\varphi_j|\otimes|\xi_i\>\<\xi_j|$ and 
\[
(\Tr_{\Le'}|\varphi_0\>\<\varphi_0|)^{\mathsf T}=
(\sum_i s_i|\varphi_i\>\<\varphi_i|)^{\mathsf T}\in \tilde {\tilde B}=B.
\]

Let $\{|e_i\>, i=1,\dots,m\}$ be an ONB in $\Le$. Define the linear map $U: \Le'\to \Le$ by
$U=\sum_{i=1}^m|e_i\>\<\xi_i|$, then $U^*U=\sum_i|\xi_i\>\<\xi_i|$ is the projection in $\Le'$ onto the subspace spanned
 by the vectors $|\xi_i\>$, $i=1,\dots,m$ and $(I\otimes U^*U)|\varphi_0\>=|\varphi_0\>$. Put 
$\varphi_U:=(I\otimes U)|\varphi_0\>=\sum_i |\varphi_i\>\otimes |e_i\>$, then it is easy to see that $|\varphi_U\>\<\varphi_U|\in
 \Ce_{\tilde B}(\Ha,\Le)$. Now we have
\begin{eqnarray*}
\sup_{|\varphi\>\<\varphi|\in \Ce_{\tilde B}(\Ha,\Le)} \|\Phi\otimes id_\Le(|\varphi\>\<\varphi|)\|_1&\ge& 
\|\Phi\otimes id_\Le(|\varphi_U\>\<\varphi_U|)\|_1\\
&\ge& \|(I\otimes U^*)(\Phi\otimes id_\Le)(|\varphi_U\>\<\varphi_U|)(I\otimes U)\|_1\\
&=& \|\Phi\otimes id_{\Le'}((I\otimes U^*)|\varphi_U\>\<\varphi_U|(I\otimes U))\|_1 \\
&=&\|\Phi\otimes id_{\Le'}(|\varphi_0\>\<\varphi_0|)\|_1\\
&=&\sup_{|\varphi\>\<\varphi|\in \Ce_{\tilde B}(\Ha,\Le')} \|\Phi\otimes id_{\Le'}(|\varphi\>\<\varphi|)\|_1
\end{eqnarray*}

Next, let $Y$ be any element in $\Ce_{\tilde B}(\Ha,\Le')$, then the corresponding map $\xi:B(\Ha)\to B(\Le')$
has the form
\[
\xi(a)=\sum_{i=1}^N V_iaV_i^*, \qquad a\in B(\Ha)
\]
where $V_i: \Ha\to\Le'$ are linear maps such that $\sum_i V_i^*V_i\in B$.
Let $\Le_0'$ be a Hilbert space with $\dim(\Le'_0)=N$ and let $\{|f_j\>, j=1,\dots,N\}$ be an ONB
 in $\Le'_0$. Define $V=\sum_{j=1}^N V_j\otimes |f_j\>$, then $V$ is a linear map
$\Ha\to \Le'\otimes \Le'_0$ with $V^*V=\sum_iV_i^*V_i\in B$. Let  $\mathcal V(a)=V aV^*$ and  let $Z$ be the Choi matrix 
of $\mathcal V$, then $Z$ is a rank one element in $\Ce_{\tilde B}(\Ha,\Le'\otimes
\Le_0')$. Moreover,
$\xi(a)=\Tr_{\Le_0'} V a V^*$ and $Y=\Tr_{\Le_0'} Z$. It follows that
\begin{eqnarray*}
\|(\Phi\otimes id_{\Le'})(Y)\|_1&=&\|(\Phi\otimes id_{\Le'})(\Tr_{\Le_0'} Z)\|_1=
\|\Tr_{\Le_0'}(\Phi\otimes id_{\Le'\otimes \Le_0'})(Z)\|_1\\
&\le& \|(\Phi\otimes id_{\Le'\otimes \Le_0'})(Z)\|_1\le \|X\|_{\Ce_B(\Ha,\Ka)}
\end{eqnarray*}
We now have
\begin{eqnarray*}
\|X\|_{\Ce_B(\Ha,\Ka)}&=&\sup_{|\varphi\>\<\varphi|\in \Ce_{\tilde B}(\Ha,\Le)}\|(\Phi\otimes id_\Le)(|\varphi\>\<\varphi|)\|_1
\le \sup_{\sigma\in \Ce_{\tilde B}(\Ha,\Le)}\|(\Phi\otimes id_\Le)(\sigma)\|_1\\
&\le& \sup_{\dim(\Le')<\infty}\ \sup_{\sigma\in \Ce_{\tilde B}(\Ha,\Le')}\|(\Phi\otimes id_{\Le'})(\sigma)\|_1\le 
\|X\|_{\Ce_B(\Ha,\Ka)}
\end{eqnarray*}

The expression for the dual norm follows by Proposition \ref{prop:genchan}. 
Suppose now that $X\ge0$, then by Corollary \ref{coro:positive}
\begin{eqnarray*}
\|X\|_{\Ce_B(\Ha,\Ka)}&=&\sup_{b\in B}\Tr X(I\otimes b^{\mathsf T})=\inf_{Y\in \Ce_B(\Ha,\Ka)}2^{ D_{max}(X\| Y)}\\
\|X\|_{I\otimes B^{\mathsf T}}&=&\sup_{Y\in \Ce_B(\Ha,\Ka)}\Tr XY=\inf_{b\in B}2^{D_{max}(X\| I\otimes b^{\mathsf T})}
\end{eqnarray*}
By (\ref{eq:choirep}), $\Tr X(I\otimes b^{\mathsf T})=\Tr \Tr_\Ha X(I\otimes b^{\mathsf T})=\Tr \Phi(b)$. Moreover, let
 $Y\in \Ce_B(\Ha,\Ka)$ and let $S$ be the corresponding $B$-channel, then
\[
\Tr XY=\Tr X(S\otimes id)(\Psi)=\Tr (S^*\otimes id)(X)\Psi=\<\psi, X_{S^*\circ \Phi}\psi\>.
\]

\qed

\subsection{Channels}

Let $B=\Se(\Ha)$, then generalized channels are the usual channels. 
In this case, we denote $\Ce_B(\Ha,\Ka)$ by $\Ce(\Ha,\Ka)$. Note that $\tilde B=\{I\}$ and
$\Ce_{\tilde B}(\Ha,\Ka)=\Se(\Ka\otimes \Ha)$.

By Proposition \ref{prop:genchan},   $\Ce(\Ha,\Ka)$ is a faithful section of a base of $B(\Ka\otimes \Ha)^+$ and 
\[
\widetilde {\Ce(\Ha,\Ka)}=\{I_\Ka\otimes \rho, \rho\in \Se(\Ha)\}.
\]
Furthemore, let $X\in B_h(\Ka\otimes \Ha)$ and let $\Phi:B(\Ha)\to B(\Ka)$ be the corresponding Hermitian map. Then
by Theorem \ref{thm:norm_genchan}, 
\[
\|X\|_{\Ce(\Ha,\Ka)} = 
\sup_{\sigma\in \Se(\Ha\otimes \Le)} \|(\Phi\otimes id_{\Le})(\sigma)\|_1=\|\Phi\|_{\diamond}
\]
with $\dim(\Le)=\dim(\Ha)$.
For the dual norm, we have
\[
\|X\|_{I\otimes \Se(\Ha)}=\inf_{\rho\in \Se(\Ha)}\inf\{\lambda>0,\ -\lambda(I\otimes \rho)\le X\le \lambda(I\otimes \rho)\} 
\]
If $\sigma\in B(\Ka\otimes \Ha)^+$, we obtain
\[
\|\sigma\|_{I\otimes \Se(\Ha)}=
\inf_{\rho\in \Se(\Ha)}2^{D_{max}(\sigma\| I\otimes \rho)}= 2^{-H_{min}(\Ka|\Ha)_\sigma}
\]
 where
$ H_{min}(\Ka|\Ha)_\sigma$ is the conditional min-entropy, see \cite{renner}. %!! min or max?

\subsection{Quantum supermaps} 
Let $\Ha_0,\Ha_1,\dots$ be a sequence of finite dimensional Hilbert spaces. For each $n\ge 1$, we define the sets
 $\Ce(\Ha_0,\dots,\Ha_n)$ as follows: $\Ce(\Ha_0,\Ha_1)$ is, as before,  the set of Choi matrices of channels
$B(\Ha_0)\to B(\Ha_1)$. For $n>1$, we define $\Ce(\Ha_0,\dots,\Ha_n)$ as 
the set of Choi matrices of cp maps $B(\Ha_{n-1}\otimes\dots\otimes\Ha_0)\to B(\Ha_n)$ that map
$\Ce(\Ha_0,\dots,\Ha_{n-1})$ into $\Se(\Ha_n)$. Such maps were called quantum supermaps in \cite{ja}\footnote{Note that this definition is slightly different from the notion of  supermap introduced in \cite{daria_super}}  and it was proved that
 for $n=2N-1$ we get precisely the set of deterministic quantum $N$-combs  for the sequence
$\{\Ha_0,\dots,\Ha_{2N-1}\}$. If $n=2N$, we get the set of $N+1$-combs for
$\{\mathbb C,\Ha_0,\dots,\Ha_{2N}\}$.

Let us fix the sequence $\Ha_0,\Ha_1,\dots$ and for this, put $\Ce_n=\Ce(\Ha_0,\dots,\Ha_n)$.
By using repeatedly Proposition \ref{prop:genchan}, we see that
 $\Ce_n$ is a faithful section of a  base of $B(\Ha_{n}\otimes\dots\otimes \Ha_0)^+$ and
 \[
\Ce_{n+1}=\Ce_{\Ce_n}(\Ha_{n}\otimes\dots\otimes\Ha_0,\Ha_{n+1}).
\]
 Moreover, by Proposition \ref{prop:genchan}, 
\[
\widetilde{\Ce_n}=I_{\Ha_n}\otimes \Ce_{n-1}=\Ce(\Ha_0,\dots,\Ha_n,\mathbb C)
\]
(note that $\Ce_{n-1}^{\mathsf T}=\Ce_{n-1}$, the last equality above follows from
 (\ref{eq:B_tilde})). For $n=2N-1$, 
this corresponds to the set of $N$-round nonmeasuring co-strategies of \cite{guwat, gutoski}.
Note also that for any finite dimensional Hilbert space $\Le'$,
\begin{eqnarray*}
\Ce_{\tilde \Ce_n}(\Ha_n\otimes\dots\otimes\Ha_0,\Le')&=&\{ Y\ge 0, \Tr_{\Le'} Y\in \Ce_n=\Ce_{\Ce_{n-1}}(\Ha_{n-1}\otimes\dots\otimes\Ha_0,\Ha_n)\}\\
&=&\{ Y\ge 0, \Tr_{\Ha_n}(\Tr_{\Le'} Y)\in \widetilde{\Ce_{n-1}} \}\\
&=&\Ce(\Ha_0,\dots,\Ha_n\otimes \Le')
\end{eqnarray*}

 Now we obtain  the following expressions for the corresponding
 norm and its dual.

\begin{thm}\label{thm:diamond_N} Let $n\ge 2$. Let $X\in B_h(\Ha_n\otimes\dots\otimes \Ha_0)$ and let 
$\Phi: B(\Ha_{n-1}\otimes\dots\otimes \Ha_0)\to B(\Ha_n)$ be the corresponding map. We have
\begin{eqnarray*}
\|X\|_{\Ce(\Ha_0,\dots,\Ha_n)}&=& \sup_{Y_1,Y_2\ge 0, Y_1+Y_2\in  \Ce(\Ha_0,\dots,\Ha_n,\mathbb C)}\Tr X(Y_1-Y_2)\\ 
&=& \sup_{Y\in \Ce(\Ha_0,\dots,\Ha_n,\mathbb C)} \|Y^{1/2}XY^{1/2}\|_1\\
&=& \inf_{Y\in \Ce(\Ha_0,\dots,\Ha_n)}\inf\{\lambda>0, -\lambda Y\le X\le \lambda Y\}\\
&=&\sup_{\dim(\Le')<\infty}\ \sup_{Y\in \Ce(\Ha_0,\dots,\Ha_{n-2},\Ha_{n-1}\otimes \Le')}
\|(\Phi\otimes id_{\Le'})(Y)\|_1\\
&=&\sup_{Y\in \Ce(\Ha_0,\dots,\Ha_{n-2},\Ha_{n-1}\otimes \Le)}
\|(\Phi\otimes id_{\Le})(Y)\|_1
\end{eqnarray*}
where $\dim(\Le)=\dim(\Ha_{n-1}\otimes\dots\otimes\Ha_0)$. Moreover, the dual norm is
\[
\|X\|_{I_{\Ha_n}\otimes \Ce(\Ha_0,\dots,\Ha_{n-1})}=\|X\|_{\Ce(\Ha_0,\dots,\Ha_n,\mathbb C)}
\]

\end{thm}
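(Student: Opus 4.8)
The plan is to obtain all five equalities for $\|X\|_{\Ce(\Ha_0,\dots,\Ha_n)}$ by specializing the general results of Section 3 to the section $B=\Ce_{n-1}=\Ce(\Ha_0,\dots,\Ha_{n-1})$, using the identity $\Ce_n=\Ce_{\Ce_{n-1}}(\Ha_{n-1}\otimes\dots\otimes\Ha_0,\Ha_n)$ established just before the statement, together with $\widetilde{\Ce_n}=I_{\Ha_n}\otimes\Ce_{n-1}$ and $\Ce_n^{\mathsf T}=\Ce_n$. Since $\Ce_{n-1}$ is a faithful section of a base of $B(\Ha_{n-1}\otimes\dots\otimes\Ha_0)^+$ (by repeated application of Proposition \ref{prop:genchan}), Theorem \ref{thm:norm_genchan} applies verbatim with $\Ha\rightsquigarrow \Ha_{n-1}\otimes\dots\otimes\Ha_0$, $\Ka\rightsquigarrow\Ha_n$, $B\rightsquigarrow\Ce_{n-1}$, and the dual section $\tilde B\rightsquigarrow I_{\Ha_n}\otimes\Ce_{n-1}^{\mathsf T}=I_{\Ha_n}\otimes\Ce_{n-1}$.

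First I would read off the diamond-type expressions. Theorem \ref{thm:norm_genchan} gives $\|X\|_{\Ce_B(\Ha,\Ka)}=\sup_{\sigma\in\Ce_{\tilde B}(\Ha,\Le)}\|(\Phi\otimes id_{\Le})(\sigma)\|_1$ with $\dim\Le=\dim\Ha$, and likewise with the supremum over all finite-dimensional $\Le'$. Here $\Ce_{\tilde B}(\Ha,\Le') = \Ce_{\tilde\Ce_{n-1}}(\Ha_{n-1}\otimes\dots\otimes\Ha_0,\Le')=\Ce(\Ha_0,\dots,\Ha_{n-2},\Ha_{n-1}\otimes\Le')$ — this is exactly the displayed computation $\Ce_{\tilde\Ce_n}(\Ha_n\otimes\dots\otimes\Ha_0,\Le')=\Ce(\Ha_0,\dots,\Ha_n\otimes\Le')$ from the text, applied at level $n-1$ instead of $n$. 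Substituting yields the last two lines of the theorem, with $\dim\Le=\dim(\Ha_{n-1}\otimes\dots\otimes\Ha_0)$.

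Next I would handle the three remaining expressions via Corollary \ref{coro:norm_ext} and the base-norm definitions of Section 3, noting that $X$ need not be positive so Corollary \ref{coro:positive} does not apply directly. The second line, $\sup_Y\|Y^{1/2}XY^{1/2}\|_1$ over $Y\in\widetilde{\Ce_n}{}^{\mathsf T}=I_{\Ha_n}\otimes\Ce_{n-1}$... more precisely over $Y\in\Ce(\Ha_0,\dots,\Ha_n,\mathbb C)$, is Corollary \ref{coro:norm_ext}(ii) (the form $\sup_{\tilde b}\|\tilde b^{1/2}X\tilde b^{1/2}\|_1$) once one identifies $\widetilde{\Ce_n}=I_{\Ha_n}\otimes\Ce_{n-1}=\Ce(\Ha_0,\dots,\Ha_n,\mathbb C)$. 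The third line, $\inf_Y\inf\{\lambda>0:-\lambda Y\le X\le\lambda Y\}$ over $Y\in\Ce_n$, is Corollary \ref{coro:norm_ext}(iii) combined with \eqref{eq:orderunitnorm_b}, i.e. $\|X\|_B=\inf_{b\in B}\|b^{-1/2}Xb^{-1/2}\|=\inf_{b\in B}\inf\{\lambda>0:-\lambda b\le X\le\lambda b\}$. The first line, $\sup\Tr X(Y_1-Y_2)$ over $Y_1,Y_2\ge0$ with $Y_1+Y_2\in\widetilde{\Ce_n}$, follows from the duality in Theorem \ref{thm:norm_B_and_dual}: $\|X\|_B=\sup_{z\in\Oe_{\tilde B}}\Tr Xz$ by definition of the dual norm and the fact that $\Oe_B$ is the unit ball of $\|\cdot\|_B$, and then $\Oe_{\tilde B}$ is precisely $\{Y_1-Y_2: Y_1,Y_2\ge0,\ Y_1+Y_2\in\tilde B\}$ by \eqref{eq:Oa} with $\tilde B=\widetilde{\Ce_n}=\Ce(\Ha_0,\dots,\Ha_n,\mathbb C)$. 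Finally, the formula for the dual norm, $\|X\|_{I_{\Ha_n}\otimes\Ce(\Ha_0,\dots,\Ha_{n-1})}=\|X\|_{\Ce(\Ha_0,\dots,\Ha_n,\mathbb C)}$, is immediate from the identification $\widetilde{\Ce_n}=I_{\Ha_n}\otimes\Ce_{n-1}$ (so that these two sections coincide) together with the dual-norm clause of Theorem \ref{thm:norm_genchan}, which says $\|X\|_{\Ce_B(\Ha,\Ka)}^*=\|X\|_{I_\Ka\otimes B^{\mathsf T}}$.

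The main obstacle I anticipate is purely bookkeeping: keeping the index shift $n\mapsto n-1$ consistent when one passes from "the section" to "its dual section," and correctly tracking the transposes so that $I_{\Ha_n}\otimes\Ce_{n-1}$, $\Ce(\Ha_0,\dots,\Ha_n,\mathbb C)$, and $\widetilde{\Ce_n}$ are all visibly the same object. There is no new analytic content beyond Theorem \ref{thm:norm_genchan}; the only point requiring a line of justification is the computation of $\Ce_{\tilde\Ce_{n-1}}(\Ha_{n-1}\otimes\dots\otimes\Ha_0,\Le')$, which is the same three-step unwinding (trace out $\Le'$, then $\Ha_n$, land in $\widetilde{\Ce_{n-2}}$) already displayed in the text and can simply be cited.
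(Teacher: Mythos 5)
Your proposal is correct and follows essentially the same route as the paper, which likewise obtains the duality statement and the first equality from Theorem \ref{thm:norm_genchan} (via Theorem \ref{thm:norm_B_and_dual} and Proposition \ref{prop:genchan}), the second and third equalities from Corollary \ref{coro:norm_ext}, and the last two from Theorem \ref{thm:norm_genchan} together with the identification $\Ce_{\widetilde{\Ce_{n-1}}}(\Ha_{n-1}\otimes\dots\otimes\Ha_0,\Le')=\Ce(\Ha_0,\dots,\Ha_{n-2},\Ha_{n-1}\otimes\Le')$. Your version merely spells out the bookkeeping that the paper leaves implicit; the only slip is labelling $I_{\Ha_n}\otimes\Ce_{n-1}$ as ``the dual section $\tilde B$'' in the first paragraph (it is the dual section of $\Ce_n$, not of $B=\Ce_{n-1}$), but you use both objects correctly afterwards.
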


{\it Proof.}
Duality of the norms is obtained from Theorem \ref{thm:norm_genchan}, this also implies the first equality.
 Next two equalities follow by Corollary \ref{coro:norm_ext}. The rest follows by Theorem \ref{thm:norm_genchan}.

\qed

For $n=2N-1$, first two expressions are exactly  the $N\diamond$-norm as obtained in  \cite{gutoski} and \cite{daria_testers}. Duality of the norms corresponding to strategies and co-strategies was 
 also obtained in \cite{gutoski}.

\section{A general quantum decision theory}

As before, let $B$ be a faithful section of a base of $B(\Ha)^+$, $J=\mathrm{span}(B)$ and $Q=J\cap B(\Ha)^+$.
 As we have seen, elements of $B$ may represent certain quantum devices and
it is therefore reasonable to consider the following definitions.

Let $\{b_\theta, \theta\in \Theta\}\subset B$ be a parametrized family, for simplicity, we will suppose that the set of
 parameters $\Theta$ is finite. If $B$ is the set of states, the couple 
$\ee=(\Ha,\{b_\theta, \theta\in \Theta\})$ is called an experiment and is
 interpreted as an a priori information on the true state of the system. Accordingly, for a section  $B$, we will define a  generalized experiment as a triple $\ee=(\Ha, B,\{ b_\theta,\theta\in \Theta\})$.

Another ingredient of decision theory is a (finite) set $D$, the set of possible decisions.
A decision procedure $\mathbf m$ is a procedure by which we pick some decision $d\in D$,
 with  probability based on the ''true value'' of $b$. That is, $\mathbf m$ is a map $B\to \Pe(D)$, 
where $\Pe(D)$ is the set of probability measures on $D$, such a map will be called a measurement on $B$, 
with values in $D$. The payoff obtained if $d\in D$ is chosen while the true value is $\theta\in \Theta$ is given by
the payoff function  $w: \Theta\times D\to[0,1]$,  the pair $(D,w)$ is called
 a (classical) decision problem. Let $\lambda$ be an a priori probability distribution 
on $\Theta$. The task is to maximize  the average payoff, that is the value of
\begin{equation}\label{eq:average_gain}
\mathcal L_{\ee,\lambda,w}(\mathbf m):=\sum_{\theta,d} \lambda_\theta w(\theta,d)\mathbf m(b_\theta)_d
  \end{equation}
over all measurements $\mathbf m: B\to \Pe(D)$.

It is quite clear that any measurement $\mathbf m$ on $B$ is given by a collection $\{\mathbf m_d, d\in D\}$ 
of elements in $Q^*$ such that $\mathbf m(b)_d=\<\mathbf m_d,b\>$ and that we must have
$\sum_d\mathbf m_d=e_B$. Similarly as it was shown in \cite{ja}, any measurement is given by a collection 
 $\{M_d,d\in D\}\subset B(\Ha)^+$ such that $\mathbf m_d=\pi(M_d)$ and $\pi(\sum_d M_d)=e_B$, that is
\[
\sum_dM_d\in \pi^{-1}(e_B)\cap B(\Ha)^+=\tilde B.
\]
 Any such collection of positive operators will be called 
a generalized POVM
   (with respect to $B$), or a $B$-POVM. It is also clear that any $B$-POVM defines a measurement on $B$ (but  it may 
happen that different generalized POVMs define the same measurement, see \cite{ja}). If $B=\Se(\Ha)$, we obtain
 a (usual) positive operator valued measure (POVM) $M=\{ M_d, d\in D\}\subset B(\Ha)^+$, $\sum_d M_d=I$.

Let us denote by $\Me_B(\Ha,D)$ the set of all generalized POVMs with respect to $B$ with values in $D$ and let 
$\{M_d, d\in D\}\in \Me_B(\Ha,D)$. Let us denote
\begin{equation}\label{eq:gPOVM}
M=\sum_{d\in D} |d\>\<d|\otimes M_d^{\mathsf T}\in B(\Ha_D\otimes \Ha)^+,
\end{equation}
 where $\Ha_D$ is a Hilbert space with $\dim(\Ha_D)=|D|$ and $\{|d\>, d\in D\}$ an ONB in $\Ha_D$. 
Then it is clear that $M$ is a block-diagonal element in $\Ce_B(\Ha,\Ha_D)$. Conversely, it is clear that
if $X=\sum_d |d\>\<d|\otimes X_d\in \Ce_B(\Ha,\Ha_D)$, then $\{X_d^{\mathsf T}, d\in D\}\in \Me_B(\Ha,D)$.
In this way, we identify $\Me_B(\Ha,D)$ with the subset of block-diagonal elements in $\Ce_B(\Ha,\Ha_D)$.

Let now $(D,w)$ be a decision problem and let $\mathbf m$ be a decision procedure  with  corresponding
$B$-POVM $M$. Then the average payoff is computed as
\[
\mathcal L_{\ee,\lambda,w}(\mathbf m)=\Le_{\ee,\lambda,w}(M):=\sum_{\theta,d} \lambda_\theta w(\theta,d)\Tr M_db_\theta=
\Tr \xi_{\ee,\lambda,w}M^{\mathsf T}
\]
where 
\[
\xi_{\ee,\lambda,w}=\sum_\theta\sum_d  \lambda_\theta w(\theta,d)|d\>\<d|\otimes b_\theta=
\sum_d|d\>\<d|\otimes\sum_\theta\lambda_\theta w(\theta,d)b_\theta\in B(\Ha_D\otimes \Ha)^+.
\]

More generally, let $\mathcal D$ be a Hilbert space, $\dim(\De)=k$ and let $W$ be a function $W: \theta \mapsto W_\theta\in B(\De)^+$, with $W_\theta\le I$.  We call the pair $(\De, W)$ a quantum decision problem, \cite{matsumoto}.
 A decision procedure is now a $B$-channel $\Phi: B(\Ha)\to B(\De)$ and the average payoff of $\Phi$ is  given by
\[
\mathcal L_{\ee,\lambda,W}(\Phi)= \sum_\theta\lambda_\theta \Tr\Phi(b_\theta)W_\theta
\]
If $X\in \Ce_B(\Ha,\De)$ is the Choi matrix of $\Phi$, then the average payoff has the form
\begin{align}
\mathcal L_{\ee,\lambda,W}(\Phi)=\mathcal L_{\ee,\lambda,W}(X)&:=  \sum_\theta\lambda_\theta \Tr(W_\theta\Tr_\Ha[(I_\De\otimes b_\theta^{\mathsf T})X])\notag \\\label{eq:maxpayoff}
&=
\sum_\theta\Tr(\lambda_\theta W_\theta\otimes b_\theta^{\mathsf T})X=\Tr \xi_{\ee,\lambda,W}X^{\mathsf T},
\end{align}
where 
\[
\xi_{\ee,\lambda,W}=\sum_\theta \lambda_\theta W_\theta^{\mathsf T}\otimes b_\theta\in B(\De\otimes \Ha)^+.
\]

It is easy to see that the set of quantum decision problems contains
also classical ones: Let $(D,w)$ be a classical decision problem and let $\Ha_D$ be as before. 
Let $W_\theta:=\sum_{d\in D} w(\theta,d) |d\>\<d|$, then $(\Ha_D,W)$ is a quantum decision problem and 
$\xi_{\ee,\lambda,W}=\xi_{\ee,\lambda,w}$. Let $X\in \Ce_B(\Ha,\Ha_D)$ and   $X=\sum_{c,d\in D} |c\>\<d|\otimes X_{cd}$ $X_{cd}\in B(\Ha)$. Since $\xi_{\ee,\lambda,w}$ is block-diagonal, we have 
\[
 \mathcal L_{\ee,\lambda,W}(X)=\mathcal L_{\ee,\lambda,w}(M),
 \]
where $M=\sum_d|d\>\<d|\otimes X_{dd}$ is a $B$-POVM. In other words, for a classical decision
problem one cannot get better results by considering quantum decision procedures. Conversely, let $(\mathcal D, W)$ be a 
quantum decision problem such that all the operators $W_\theta$ commute. Then there is a basis of $\mathcal D$ 
with respect to which all the operators $W_\theta$ are given by diagonal matrices, and the problem is equivalent 
to a classical problem, in the sense that the we obtain   the same average payoffs. Hence we can view the set of classical
decision problems as the subset of quantum decision problems such that the payoff function $W$ has commutative range.

\begin{thm}\label{thm:payoff}  Let $\ee=(\Ha,B,\{b_\theta,\theta\in \Theta\})$ be a generalized experiment and let 
$(\De,W)$ be a quantum decision problem. Then  the maximal average payoff is given by 
\[
\Le_{\ee,\lambda,W}:=\max_{X\in \Ce_B(\Ha,\De)} \mathcal L_{\ee,\lambda,W}(X)=\|\xi_{\ee,\lambda,W}\|_{I_\De\otimes B}
\]
 If $(\De,W)$ is classical, then
\[
\mathcal L_{\ee,\lambda,W}=\inf_{b\in B} \sup_{d\in D} 
2^{D_{\max}(\sum_\theta \lambda_\theta w(\theta,d)
b_\theta\| b)}
\]
\end{thm}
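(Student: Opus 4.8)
The plan is to identify the maximal average payoff as a base norm and then invoke Corollary~\ref{coro:positive} and Theorem~\ref{thm:norm_genchan}. First I would observe from (\ref{eq:maxpayoff}) that $\mathcal L_{\ee,\lambda,W}(X)=\Tr \xi_{\ee,\lambda,W} X^{\mathsf T}$, where the operator $\xi_{\ee,\lambda,W}=\sum_\theta \lambda_\theta W_\theta^{\mathsf T}\otimes b_\theta$ is positive. Maximizing over all decision procedures means maximizing over $X\in\Ce_B(\Ha,\De)$, equivalently over $X^{\mathsf T}$ ranging over $\Ce_B(\Ha,\De)^{\mathsf T}$. By Proposition~\ref{prop:genchan}, $\widetilde{\Ce_B(\Ha,\De)}=\{I_\De\otimes b^{\mathsf T},\ b\in B\}$, and since transposition is an order-isomorphism of $B(\De\otimes\Ha)^+$ interchanging $\Ce_B(\Ha,\De)$ with its transpose, the dual section of $\Ce_B(\Ha,\De)^{\mathsf T}$ is $I_\De\otimes B$. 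Thus $\Ce_B(\Ha,\De)^{\mathsf T}=\widetilde{I_\De\otimes B}$, and by Corollary~\ref{coro:positive} applied to the faithful section $I_\De\otimes B$ of $B(\De\otimes\Ha)^+$,
\[
\max_{X\in\Ce_B(\Ha,\De)}\Tr\xi_{\ee,\lambda,W}X^{\mathsf T}=\sup_{Y\in\widetilde{I_\De\otimes B}}\Tr\xi_{\ee,\lambda,W}Y=\|\xi_{\ee,\lambda,W}\|_{I_\De\otimes B},
\]
and the supremum is attained because $\widetilde{I_\De\otimes B}$ is compact (it is the intersection of $B(\De\otimes\Ha)^+$ with an affine subspace, cf.\ (\ref{eq:tildeb})). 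This gives the first displayed formula.

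For the classical case, let $(D,w)$ correspond to $(\Ha_D,W)$ with $W_\theta=\sum_d w(\theta,d)|d\>\<d|$, so that $\xi_{\ee,\lambda,W}=\sum_d |d\>\<d|\otimes\sum_\theta\lambda_\theta w(\theta,d)b_\theta$ is block-diagonal. I would compute $\|\xi_{\ee,\lambda,W}\|_{I_{\Ha_D}\otimes B}$ using Corollary~\ref{coro:norm_ext}(iii): it equals $\inf_{b'\in I_{\Ha_D}\otimes B}\|b'^{-1/2}\xi_{\ee,\lambda,W}b'^{-1/2}\|$, and ranging $b'=I_{\Ha_D}\otimes b$ over $I_{\Ha_D}\otimes B$, the conjugated operator stays block-diagonal with blocks $b^{-1/2}(\sum_\theta\lambda_\theta w(\theta,d)b_\theta)b^{-1/2}$. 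The operator norm of a block-diagonal operator is the maximum of the norms of its blocks, so
\[
\|\xi_{\ee,\lambda,W}\|_{I_{\Ha_D}\otimes B}=\inf_{b\in B}\max_{d\in D}\bigl\|b^{-1/2}\textstyle\bigl(\sum_\theta\lambda_\theta w(\theta,d)b_\theta\bigr)b^{-1/2}\bigr\|,
\]
and by the identity $D_{\max}(a\|b)=\log\|a\|_b$ (from Section~\ref{sec:subops}, including the infinite case when the support condition fails), each block norm equals $2^{D_{\max}(\sum_\theta\lambda_\theta w(\theta,d)b_\theta\|b)}$, yielding the second formula.

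The main obstacle I anticipate is bookkeeping around transposition and support conditions rather than anything deep: one must check carefully that $\Ce_B(\Ha,\De)^{\mathsf T}$ really is the dual section of $I_\De\otimes B$ (using $\widetilde{B^{\mathsf T}}=\tilde B^{\mathsf T}$ and Lemma~\ref{lemma:B_tilde_B}(ii)), and that passing from $\sup$ over $\Oe_{\tilde B}$ to $\sup$ over $\tilde B$ in Corollary~\ref{coro:positive} is legitimate for a positive argument—this is exactly what that corollary provides. In the classical reduction, the only subtlety is justifying that the operator norm of a direct sum is the max over summands and that the infimum over $b\in B$ commutes with the finite max over $d$; both are elementary, and the possibility $D_{\max}=\infty$ is already accommodated in the definitions, so no separate case analysis is needed.
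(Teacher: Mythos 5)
Your argument is correct and follows essentially the same route as the paper: the first identity comes from (\ref{eq:maxpayoff}) together with Corollary~\ref{coro:positive} and Proposition~\ref{prop:genchan} (identifying $\Ce_B(\Ha,\De)^{\mathsf T}$ with the dual section of $I_\De\otimes B$), and the classical case reduces to a block-diagonal computation. Your phrasing of that last step via $\inf_{b}\|(I\otimes b^{-1/2})\xi(I\otimes b^{-1/2})\|$ and Corollary~\ref{coro:norm_ext}(iii) is just the operator-norm form of the paper's order-theoretic observation that $\xi\le 2^\gamma(I_\De\otimes b)$ iff every block satisfies $\xi_{\ee,\lambda,W,d}\le 2^\gamma b$, so the two proofs coincide in substance.
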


{\it Proof.} By (\ref{eq:maxpayoff}), the  maximal average payoff is given by
\[
\mathcal L_{\ee,\lambda,W}=\max_{X\in \Ce_B(\Ha,\De)}\Tr \xi_{\ee,\lambda,W} X^{\mathsf T}
=\|\xi_{\ee,\lambda,W}\|_{I_\De\otimes B},
\]
 the last equality follows by Corollary \ref{coro:positive} and Proposition \ref{prop:genchan}. 
If $(\De,W)$ is classical, then we may suppose that the matrices $W_\theta$ are diagonal. Then
 $\xi_{\ee,\lambda,W}=\sum_d |d\>\<d| \otimes \xi_{\ee,\lambda,W,d}$ is block-diagonal, where 
$\xi_{\ee,\lambda,W,d}=\sum_\theta \lambda_\theta w(\theta,d)b_\theta$. By 
 Corollary \ref{coro:positive}, and definition of $D_{max}$, 
\begin{align*}
\|\xi_{\ee,\lambda,W}\|_{I_\De\otimes B}&=\inf_{b\in B} 2^{D_{max}(\xi_{\ee,\lambda,W}\| I_\De\otimes b)}=
\inf_{b\in B}\inf\{\gamma>0, \xi_{\ee,\lambda,W,d}\le 2^\gamma b, \forall d\in D\}\notag\\
&= \inf_{b\in B}\sup_{d\in D} 2^{D_{max}(\xi_{\ee,\lambda,W,d}\| b)}
\end{align*}

\qed
We can also use Corollary \ref{coro:minmaximizer} to characterize  decision procedures that maximize  average
 payoff, we will call such procedures optimal with respect to $(\ee,\lambda, W)$.

\begin{coro}\label{coro:optimal}
Let $(\De,W)$ be a decision problem and let $X\in \Ce_B(\Ha,\De)$. Then $X$ is optimal with respect to $(\ee,\lambda,W)$ if and only if
there is some element $q\in Q$ such that $\xi_{\ee,\lambda,W}\le I_\De\otimes q$ and 
\begin{equation}\label{eq:optimal_proc}
((I\otimes q)-\xi_{\ee,\lambda,W})X^{\mathsf T}=0.
\end{equation}
If $(\De,W)$ is classical, then a $B$-POVM $(M_1,\dots,M_{\dim(\De)})$
 is optimal if and only if there is some $q\in Q$ such that $\sum_\theta\lambda_\theta w(\theta,d)b_\theta\le q$ for all $d$ and
\begin{equation}\label{eq:optimal_proc_class}
q\sum_d M_d= \sum_\theta\lambda_\theta b_\theta \sum_d w(\theta,d)M_d.
\end{equation}

\end{coro}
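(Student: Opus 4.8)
The plan is to deduce Corollary \ref{coro:optimal} directly from the characterization of maximizers in Corollary \ref{coro:minmaximizer}(i), applied to the positive operator $a=\xi_{\ee,\lambda,W}$ and the faithful section $I_\De\otimes B$ of a base of $B(\De\otimes\Ha)^+$. By Theorem \ref{thm:payoff}, a decision procedure $X\in\Ce_B(\Ha,\De)$ is optimal if and only if $\mathcal L_{\ee,\lambda,W}(X)=\Tr\xi_{\ee,\lambda,W}X^{\mathsf T}$ equals $\|\xi_{\ee,\lambda,W}\|_{I_\De\otimes B}$. By Proposition \ref{prop:genchan}, the dual section of $I_\De\otimes B$ is $\Ce_B(\Ha,\De)$ itself (more precisely $\widetilde{\Ce_B(\Ha,\De)}=\{I_\De\otimes b^{\mathsf T}, b\in B\}$, so the roles of $B$ and $\tilde B$ here are played by $\Ce_B(\Ha,\De)$ and $I_\De\otimes B^{\mathsf T}$; one must keep careful track of transposes).

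First I would set up the identification cleanly. Write $\hat B:=I_\De\otimes B^{\mathsf T}$, a faithful section of a base of $B(\De\otimes\Ha)^+$, whose dual section is $\tilde{\hat B}=\Ce_B(\Ha,\De)^{\mathsf T}$ by Proposition \ref{prop:genchan}; but since $\Tr\xi X^{\mathsf T}=\Tr\xi^{\mathsf T}X$ and $\|a\|_{\hat B}=\|a^{\mathsf T}\|_{\hat B^{\mathsf T}}$, it is cleaner to note that $X$ is optimal iff $\|\xi\|_{I_\De\otimes B}=\Tr\xi X^{\mathsf T}$ with $X^{\mathsf T}$ ranging over $\Ce_B(\Ha,\De)^{\mathsf T}$, which is the dual section of $I_\De\otimes B$. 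Then Corollary \ref{coro:minmaximizer}(i), with $a=\xi$, the section $B\rightsquigarrow I_\De\otimes B$, $\tilde B\rightsquigarrow\Ce_B(\Ha,\De)^{\mathsf T}$, and the maximizer $\tilde b_0\rightsquigarrow X^{\mathsf T}$, says that $\|\xi\|_{I_\De\otimes B}=\Tr\xi X^{\mathsf T}$ if and only if there exists $q_2\in Q_2:=(I_\De\otimes J)\cap B(\De\otimes\Ha)^+$ with $\xi\le q_2$ and $(q_2-\xi)X^{\mathsf T}=0$. The remaining point is that every element $q_2$ of this cone has the form $q_2=I_\De\otimes q$ for some $q\in Q$: indeed $I_\De\otimes J$ consists exactly of operators of that shape, and $I_\De\otimes q\ge 0$ forces $q\ge 0$, hence $q\in J\cap B(\Ha)^+=Q$. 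Substituting gives condition (\ref{eq:optimal_proc}).

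For the classical case, I would specialize $\De=\Ha_D$, $W_\theta=\sum_d w(\theta,d)|d\>\<d|$, so that $\xi_{\ee,\lambda,W}=\sum_d|d\>\<d|\otimes\xi_{\ee,\lambda,W,d}$ is block-diagonal with $\xi_{\ee,\lambda,W,d}=\sum_\theta\lambda_\theta w(\theta,d)b_\theta$, and $X=\sum_d|d\>\<d|\otimes X_{dd}$ may be taken block-diagonal (off-diagonal blocks do not affect the payoff, by the discussion preceding Theorem \ref{thm:payoff}), with $M_d=X_{dd}^{\mathsf T}$ the associated $B$-POVM. The operator inequality $\xi\le I_\De\otimes q$ then reads $\sum_\theta\lambda_\theta w(\theta,d)b_\theta\le q$ for every $d$, and the complementarity condition $(I_\De\otimes q-\xi)X^{\mathsf T}=0$, read block by block, becomes $(q-\xi_{\ee,\lambda,W,d})M_d=0$ for each $d$; summing over $d$ and rearranging gives (\ref{eq:optimal_proc_class}). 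Conversely, the per-block conditions $(q-\sum_\theta\lambda_\theta w(\theta,d)b_\theta)M_d=0$ are what one must verify follow from the single pooled equation (\ref{eq:optimal_proc_class}) together with the inequalities $\sum_\theta\lambda_\theta w(\theta,d)b_\theta\le q$: since each summand $(q-\xi_{\ee,\lambda,W,d})M_d$ is (after symmetrizing) positive semidefinite and their sum vanishes, each term vanishes.

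The main obstacle is bookkeeping rather than depth: getting the transposes and the $B$-versus-$\tilde B$ dictionary right when transporting Corollary \ref{coro:minmaximizer}(i) to the section $I_\De\otimes B$ of $B(\De\otimes\Ha)^+$, and justifying that the witnessing element of the ambient cone $Q_2$ is forced to be of product form $I_\De\otimes q$ with $q\in Q$. The passage from the pooled equation (\ref{eq:optimal_proc_class}) back to the block-wise complementarity in the classical converse also needs the little positivity-of-each-summand argument, but this is routine.
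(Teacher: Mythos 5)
Your proposal is correct and follows essentially the same route as the paper: the first part is Corollary \ref{coro:minmaximizer}(i) applied to $\xi_{\ee,\lambda,W}$ and the section $I_\De\otimes B$ (whose cone is indeed $I_\De\otimes Q$), and the classical part reduces the pooled equation to block-wise complementarity via positivity of each summand. One phrasing to tighten: the operators $(q-\xi_{\ee,\lambda,W,d})M_d$ do not symmetrize to a family summing to zero, so you should pass to traces first --- $\sum_d\Tr(q-\xi_{\ee,\lambda,W,d})M_d=0$ with each trace nonnegative forces each to vanish, and then positivity gives $(q-\xi_{\ee,\lambda,W,d})M_d=0$ --- which is exactly what the paper does.
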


\begin{proof} The first part follows directly by Theorem \ref{thm:payoff} and Corollary \ref{coro:minmaximizer}. If $(\De,W)$ is
 classical, then $\xi_{\ee,\lambda,W}$ is block-diagonal, so that $\xi_{\ee,\lambda,W}\le I\otimes q$ if and only if
 each block is majorized by $q$, that is $\sum_\theta\lambda_\theta w(\theta,d)b_\theta\le q$. Moreover, (\ref{eq:optimal_proc_class}) implies that 
 \[
 \sum_d \Tr(q-\sum_\theta\lambda_\theta \rho_\theta w(\theta,d))M_d=0.
 \]
Since this is a sum of nonnegative elements, it is zero if and only if each summand is equal to zero. Again by positivity, 
this is equivalent to (\ref{eq:optimal_proc}).

\end{proof}

In particular, in the case $B=\Se(\Ha)$, we obtain the following optimality condition for POVMs.

\begin{coro}\label{coro:povms} Let $\ee=\{\sigma_\theta, \theta\in \Theta\}$ be an experiment and let $(D,w)$ be a 
 classical decision problem. Then a POVM $\{M_d, d\in D\}$ is optimal with respect to $(\ee,\lambda,W)$ if and only if
$q:=\sum_d\sum_\theta\lambda_\theta\sigma_\theta w(\theta,d)M_d$ is hermitian and such that $\sum_\theta\lambda_\theta \sigma_\theta w(\theta,d)\le q$ for all $d$.

\end{coro}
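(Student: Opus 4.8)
The plan is to specialize Corollary~\ref{coro:optimal} to the case $B=\Se(\Ha)$ and check that, in this setting, the abstract optimality condition reduces to exactly the stated one. First I would recall that when $B=\Se(\Ha)$ we have $Q=B_h(\Ha)$ (since $\Se(\Ha)$ spans all of $B_h(\Ha)$), so the requirement ``$q\in Q$'' becomes simply ``$q\in B_h(\Ha)$,'' i.e. $q$ is hermitian. Next, since the decision problem $(D,w)$ is classical, $\xi_{\ee,\lambda,w}$ is block-diagonal with blocks $\xi_{\ee,\lambda,w,d}=\sum_\theta\lambda_\theta w(\theta,d)\sigma_\theta$, and by the discussion in Corollary~\ref{coro:optimal} the condition $\xi_{\ee,\lambda,w}\le I_\De\otimes q$ is equivalent to $\sum_\theta\lambda_\theta\sigma_\theta w(\theta,d)\le q$ for every $d$. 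This already produces the majorization part of the claim.

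The substantive point is to identify the complementary-slackness condition (\ref{eq:optimal_proc_class}) with the formula $q=\sum_d\sum_\theta\lambda_\theta\sigma_\theta w(\theta,d)M_d$. Here I would use the fact that a POVM satisfies $\sum_d M_d=I$. Then the left-hand side of (\ref{eq:optimal_proc_class}) is just $q$, while the right-hand side is $\sum_\theta\lambda_\theta\sigma_\theta\sum_d w(\theta,d)M_d=\sum_d\sum_\theta\lambda_\theta\sigma_\theta w(\theta,d)M_d$, so (\ref{eq:optimal_proc_class}) says precisely that $q$ equals this operator. Thus, given an optimal POVM, the operator $q$ provided by Corollary~\ref{coro:optimal} must coincide with $\sum_d\sum_\theta\lambda_\theta\sigma_\theta w(\theta,d)M_d$; and conversely, if this operator is hermitian and dominates each $\sum_\theta\lambda_\theta\sigma_\theta w(\theta,d)$, then taking it as $q$ verifies both hypotheses of Corollary~\ref{coro:optimal}, so the POVM is optimal.

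One subtlety I would be careful about is that in Corollary~\ref{coro:optimal} the element $q$ is a priori only asserted to exist, not to be unique or canonically given; the present corollary pins it down by the slackness relation together with $\sum_d M_d=I$. I should note that the operator $\sum_d\sum_\theta\lambda_\theta\sigma_\theta w(\theta,d)M_d$ need not be hermitian for an arbitrary (non-optimal) POVM, which is why hermiticity of $q$ appears explicitly as part of the optimality criterion rather than being automatic. I expect the main (minor) obstacle to be simply the bookkeeping of transposes: Corollary~\ref{coro:optimal} is phrased with $M^{\mathsf T}$ via the identification of POVMs with block-diagonal Choi matrices as in (\ref{eq:gPOVM}), and I would have to check that transposing (\ref{eq:optimal_proc_class}) (equivalently, applying the correspondence $M_d\leftrightarrow M_d^{\mathsf T}$) leaves the stated condition unchanged in form — which it does, since the condition is built symmetrically from the $M_d$ and the $\sigma_\theta$ and one may absorb the transpose into the operators throughout. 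With these identifications in place, the proof is just a direct translation and requires no further work.
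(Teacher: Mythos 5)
Your argument follows exactly the route the paper intends: Corollary~\ref{coro:povms} is stated without a separate proof because it is the specialization of Corollary~\ref{coro:optimal} to $B=\Se(\Ha)$, where $\sum_d M_d=I$ turns (\ref{eq:optimal_proc_class}) into the explicit formula for $q$, and your handling of the block-diagonal structure and the transposes is fine. One inaccuracy to fix: for $B=\Se(\Ha)$ the cone is $Q=J\cap B(\Ha)^+=B(\Ha)^+$, not $B_h(\Ha)$ (the paper defines $Q$ as the \emph{positive} part of $J$), so the requirement ``$q\in Q$'' in Corollary~\ref{coro:optimal} means $q\ge 0$, not merely that $q$ is hermitian. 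This does not derail the proof, because each operator $\sum_\theta\lambda_\theta w(\theta,d)\sigma_\theta$ is positive semidefinite, so a hermitian $q$ dominating all of them is automatically positive; but that observation is needed in the converse direction and should be stated explicitly rather than replaced by the claim $Q=B_h(\Ha)$.
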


\begin{rem}\label{rem:payoff_loss} Sometimes the function $W$ is interpreted as loss rather than payoff, then $\Le_{\ee,\lambda,W}(\Phi)$
 is the average loss of the procedure $\Phi$ which has to be minimized. Let $W'_\theta=I_\De-W_\theta$, then $\theta\mapsto
 W_\theta$ is again a payoff (or loss) function and we have
\begin{align*}
\min_\Phi \Le_{\ee,\lambda,W}=&\min_{\Phi} \sum_\theta\lambda_\theta \Tr\Phi(b_\theta)W_\theta
=
\min_{\Phi}\sum_\theta\lambda_\theta \Tr\Phi(b_\theta)(I-W'_\theta)\\
&=1-\max_\Phi \mathcal L_{\ee,\lambda,W'}(\Phi)= 1-\|\xi_{\ee,\lambda,W'}\|_{I_\De\otimes B}
\end{align*}
Moreover, an optimal procedure $\Phi$ that minimizes the loss is a maximizer for $\Le_{\ee,\lambda,W'}$, hence satisfies the 
 conditions of Corollary \ref{coro:optimal}, with $W$ replaced by $W'$. Note that then the codtition from Corollary 
\ref{coro:povms} is the same as obtained in \cite{holevo1}.
\end{rem}

Let $\{M_d, d\in D\}$ be a $B$-POVM  with  $\sum_d M_d=c\in \tilde B$. Then since $0\le M_d\le c$ for all $d$, 
 we have 
\[
M_d=c^{1/2} \Lambda_d c^{1/2},\qquad d\in D,
\]
 where $\Lambda_d:=c^{-1/2}M_d c^{-1/2}$ defines a (usual) POVM on the support
 $\supp c$ of $c$. It follows that  $\Tr xM_d=\Tr c^{1/2}xc^{1/2}\Lambda_d$, that is, we can decompose the measurement defined
 by $\{M_d\}$ into a cp map $\chi_c: x\mapsto c^{1/2}xc^{1/2}$ followed by the usual measurement given by $\{\Lambda_d\}$,
 note that $\chi_c\in \Ce_B(\Ha,\supp c)$ so that $\chi_c$ maps a generalized experiment $\ee=(\Ha,B,\{b_\theta,\theta\in \Theta\})$ onto an 
ordinary experiment $\ee_c:=\{\supp c, \Se(\supp c),\{\chi_c(b_\theta),\theta\in \Theta\})$. We write this decomposition as $M=\Lambda \circ \chi_c$.  Such a decoposition was also  used in \cite{daria_testers} in the case of testers and in 
 \cite{ja} for  generalized POVMs.
Using this decomposition, we obtain the following optimality condition for $B$-POVMs.

\begin{coro}\label{coro:opt_class_dec}
Let $(D,w)$ be a classical decision problem and let $M\in \Me_B(\Ha,D)$ with decomposition $M=\Lambda\circ\chi_c$. 
Suppose $c$ is invertible  and let 
$\ee_c:=(\Ha,\{\sigma_\theta:=\chi_c(b_\theta),\theta\in \Theta\}) $.
Then $M$ is optimal for $(\ee,\lambda,w)$ if and only if $\Lambda$ is optimal for $(\ee_c,\lambda,w)$ and\[
\sum_d\sum_\theta \lambda_\theta w(\theta,d) \sigma_\theta \Lambda_d \in \chi_c(J)
\] 
\end{coro}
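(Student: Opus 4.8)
The plan is to translate both optimality conditions through the congruence $\chi_c$ and then simply compare them. Write $\xi_d:=\sum_\theta\lambda_\theta w(\theta,d)b_\theta\in Q$, $\eta_d:=\sum_\theta\lambda_\theta w(\theta,d)\sigma_\theta$, and $q_0:=\sum_d\sum_\theta\lambda_\theta w(\theta,d)\sigma_\theta\Lambda_d=\sum_d\eta_d\Lambda_d$, the operator appearing in the displayed membership condition. Since $c$ is invertible, $\chi_c(x)=c^{1/2}xc^{1/2}$ is a linear bijection of $B_h(\Ha)$ with inverse $\chi_{c^{-1}}$, it preserves positivity in both directions (hence is an order isomorphism), it carries the subspace $J$ onto the subspace $\chi_c(J)$ and therefore $Q=J\cap B(\Ha)^+$ onto $\chi_c(J)\cap B(\Ha)^+$; moreover $\eta_d=\chi_c(\xi_d)$, and $\chi_{c^{-1}}(M_d)=\Lambda_d$ because $M_d=c^{1/2}\Lambda_dc^{1/2}$. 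I also record that $\ee_c$ is a genuine ordinary experiment, since $\Tr\sigma_\theta=\Tr b_\theta c=1$ as $c\in\tilde B$, so that $\Lambda=\{\Lambda_d\}$ is an honest POVM on $\Ha$.

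First I would apply Corollary~\ref{coro:optimal} to the $B$-POVM $M$. Using $\sum_dM_d=c$, it says: $M$ is optimal for $(\ee,\lambda,w)$ iff there is $q\in Q$ with $\xi_d\le q$ for all $d$ and $qc=\sum_d\xi_dM_d$. Now conjugate: left-multiplying this identity by $c^{1/2}$ and right-multiplying by $c^{-1/2}$ turns it into $\chi_c(q)=\sum_d\chi_c(\xi_d)\chi_{c^{-1}}(M_d)=\sum_d\eta_d\Lambda_d=q_0$, and applying the order isomorphism $\chi_c$ to the inequalities yields $\eta_d\le\chi_c(q)$ for all $d$. Since $q\mapsto\chi_c(q)$ is a bijection of $Q$ onto $\chi_c(J)\cap B(\Ha)^+$ respecting the order, this is exactly: $M$ is optimal iff $q_0\in\chi_c(J)$, $q_0\ge0$, and $\eta_d\le q_0$ for all $d$.

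Then I would compare with Corollary~\ref{coro:povms} applied to the ordinary experiment $\ee_c$ and the POVM $\Lambda$, which gives: $\Lambda$ is optimal for $(\ee_c,\lambda,w)$ iff $q_0$ is hermitian and $\eta_d\le q_0$ for all $d$. What is left is purely formal bookkeeping of the conditions on $q_0$: assuming $D\neq\emptyset$, the inequalities $\eta_d\le q_0$ together with $\eta_d\ge0$ already force $q_0\ge\eta_d\ge0$, in particular $q_0$ hermitian; conversely hermiticity of $q_0$ together with the inequalities is precisely optimality of $\Lambda$. Hence ``$M$ optimal'' is equivalent to ``$\Lambda$ optimal for $(\ee_c,\lambda,w)$'' together with the single remaining condition $q_0\in\chi_c(J)$, which is the assertion. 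For the converse implication one takes $q:=\chi_{c^{-1}}(q_0)$, which lies in $Q$ exactly because $q_0\in\chi_c(J)$ and $q_0\ge0$, and checks that it satisfies the hypotheses of Corollary~\ref{coro:optimal}.

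The computations are routine; the only mildly delicate step is the conjugation of the non-self-adjoint operator identity $qc=\sum_d\xi_dM_d$ and, above all, keeping straight which of the three properties of $q_0$ (self-adjointness, positivity, membership in $\chi_c(J)$) are automatic from the inequalities and which is the genuinely new constraint. I expect that bookkeeping to be the main obstacle, so I would perform the conjugation symmetrically as $c^{1/2}(\,\cdot\,)c^{-1/2}$ to make the emergence of $q_0$ and of $\eta_d\le\chi_c(q)$ transparent, and state explicitly that all three facts about $q_0$ follow once $q_0\in\chi_c(J)$ and $\eta_d\le q_0$ hold.
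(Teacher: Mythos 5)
Your argument is correct and is precisely the route the paper takes: its proof consists of the single line ``Directly by Corollary~\ref{coro:optimal} and~\ref{coro:povms}'', and your conjugation of the condition $qc=\sum_d\xi_dM_d$ by $\chi_c$, together with the observation that positivity and hermiticity of $q_0$ come for free from $\eta_d\le q_0$ while membership in $\chi_c(J)$ is the one genuinely new constraint, is exactly the bookkeeping that line leaves to the reader.
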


\begin{proof} Directly by Corollary \ref{coro:optimal} and \ref{coro:povms}.
 
\end{proof}

\begin{ex}[Multiple hypothesis testing]

  Suppose a family  $\{b_1,\dots,b_k\}$ of elements in $B$ is given and the task is to decide which is the true one,
 moreover, given
 some $\lambda\in \Pe(\{1,\dots k\})$, we want to minimize the average probability
 of making an error.  
In this case, put $\ee=(\Ha,B,\{b_1,\dots,b_k\})$,
$\Theta=D=\{1,\dots,k\}$ and the loss function is $w(i,j)=1-\delta_{ij}$, where $\delta$ is the Kronecker symbol.
A decision procedure is a $B$-POVM $\{M_1,\dots, M_k\}$, where $M_i$ corresponds to the choice $b_i$.  
Then the average loss is the average error probability
\[
\Le_{\ee,\lambda,w}(M)=\sum_{i,j}\lambda_i(1-\delta_{ij})\Tr b_i M_j=\sum_{i\ne j}\lambda_i \Tr b_i M_j.
\]
We can use Remark \ref{rem:payoff_loss} to compute the minimal average error probability 
$\Pi_\lambda^B(b_1,\dots,b_k):=\min_M \Le_{\ee,\lambda,w}(M)$. 
We obtain $\xi_{\ee,\lambda,w'}=\sum_i |i\>\<i|\otimes \lambda_ib_i$, so that the minimal average error probability is
\[
\Pi_\lambda^B(b_1,\dots,b_k)=1-\|\xi_{\lambda,w'}\|_{I\otimes B}=1-\inf_{b\in B}\sup_{1\le i\le k} 2^{D_{max}(\lambda_ib_i\| b)}
\]
For $B=\Se(\Ha)$, the last equality was obtained in \cite{damo}, see also \cite{renner}.

Let us now look at an optimal decision procedure. Let $\{M_i\}$ be a $B$-POVM with decomposition $M=\Lambda\circ\chi_c$
 and let us suppose that $c=\sum_i M_i$ is positive definite. Let $\sigma_i=\chi_c(b_i)$ and $\ee_c=(\Ha,\Se(\Ha),\{\sigma_1,\dots,\sigma_k\})$. Suppose that $\{\Lambda_i\}$ is optimal for $(\ee_c,\lambda,w)$, this is equivalent to the fact that
$\sum_i\lambda_i\sigma_i\Lambda_i=: p$ is a hermitian element that majorizes $\lambda_i\sigma_i$ for all $i$.
By Remark \ref{rem:payoff_loss} and Corollary
 \ref{coro:opt_class_dec}, $\{M_i\}$ is then optimal for $(\ee,\lambda,w)$ if and only if $p\in c^{1/2}Jc^{1/2}$, note that
 $\sigma_i\in c^{1/2}Jc^{1/2}$ for all $i$.

\end{ex}

\begin{ex}[Hypothesis testing] \label{ex:hypo} Let $k=2$ in the previous example, then
 we obtain the hypothesis testing, or discrimination problem, considered at the end of Section 
\ref{sec:basen}. Here we have 
\[
\||0\>\<0|\otimes sb_0 + |1\>\<1|\otimes  tb_1\|_{I_2\otimes B}=\frac 12(\|sb_0-tb_1\|_B+s+t)
\]
for $s,t>0$, so that indeed, $1-\|\xi_{\ee,\lambda,w'}\|_{I_2\otimes B}=\frac12(1-\|\lambda b_0-(1-\lambda)b_1\|_B)$ 
is the 
minimal Bayes error probability.  Let $\{M_0,M_1\}$ be a $B$-POVM such that $c=M_0+M_1$ is positive definite and 
let $\sigma_i=\chi_c(b_i)$. Suppose $\lambda=1/2$ and let $\Lambda_i=c^{-1/2}M_ic^{-1/2}$ be  a POVM which is optimal for 
$(\ee_c,\lambda,w)$, then
$\Lambda_0$ is the projection onto the support of $(\sigma_0-\sigma_1)_+$ and
 $\sum_i\lambda_i\sigma_i\Lambda_i=\frac12((\sigma_0-\sigma_1)_++\sigma_1)$. From the previous example, it is clear that
$\{M_0,M_1\}$ is then an optimal test for $(\ee,\lambda,w)$ if and only if any of (and therefore all of) $(\sigma_0-\sigma_1)_+$,
 $(\sigma_0- \sigma_1)_-$, $|\sigma_0-\sigma_1|$ is an element in $c^{1/2}Jc^{1/2}$.  
\end{ex}

In particular, let $B=\Ce(\Ha,\Ka)$. In this case, the $B$-POVMs are exactly the quantum 1-testers of 
\cite{daria_testers,ziman}, see also \cite{ja}. More precisely, the $B$-POVMs $M=\{M_d, d\in D\}\subset B(\Ka\otimes \Ha)^+$
 satisfy $\sum_d M_d =I\otimes \sigma$ 
for some $\sigma\in \Se(\Ha)$.  Then the decomposition $M=\Lambda\circ \chi_{I\otimes\sigma}$ corresponds to an implementation of the tester $M$ by a triple $(\Ha_A, \rho, \Lambda)$, where $\Ha_A=\supp(\sigma)$,  
$\rho=\chi_{I\otimes \sigma}(\Psi)$  is a pure 
state in $\Se(\Ha\otimes\Ha_A)$ and
\[
\Tr M_d X_{\Phi}=\Tr \Lambda_d((\Phi\otimes id_A)(\rho))
\]
Note that $\sigma=\dim(\Ha)^{-1} I$ is obtained in the case that the  input state $\rho$ is maximally entangled.
By the results of Example \ref{ex:hypo}, we have the following:

\begin{coro} Let $b_i=X_{\Phi_i}$ be Choi matrices of the channels $\Phi_0,\Phi_1: B(\Ha)\to B(\Ka)$. Consider the problem
 of testing the hypothesis $\Phi_0$ against $\Phi_1$, with a priori probability $\lambda\in [0,1]$.
Then there exists an optimal 1-tester with maximally entangled input state if and only if
$\Tr_\Ka |\lambda X_{\Phi_0}-(1-\lambda)X_{\Phi_1}|$  is a multiple of $I_\Ha$.

\end{coro}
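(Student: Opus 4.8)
The plan is to specialise the decision-theoretic results of this section to $B=\Ce(\Ha,\Ka)$, with $b_i=X_{\Phi_i}$, $\Theta=D=\{0,1\}$ and loss function $w(i,j)=1-\delta_{ij}$, so that the decision procedures are exactly the binary $1$-testers, and then to apply the optimality criterion for $B$-POVMs. As recalled just before the statement, a $1$-tester $\{M_0,M_1\}$ satisfies $M_0+M_1=I_\Ka\otimes\sigma$ for some $\sigma\in\Se(\Ha)$, and its input state is maximally entangled precisely when $\sigma=\dim(\Ha)^{-1}I_\Ha$, i.e.\ when $c:=M_0+M_1=\dim(\Ha)^{-1}I_{\Ka\otimes\Ha}$. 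Since this $c$ is invertible, everything reduces to Corollary~\ref{coro:opt_class_dec} (and the discussion in Example~\ref{ex:hypo}) applied with this particular $c$; note that then $\chi_c(J)=c^{1/2}Jc^{1/2}=J$ and $\sigma_i:=\chi_c(b_i)=\dim(\Ha)^{-1}X_{\Phi_i}$.

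First I would record the explicit form of the span: $J=\mathrm{span}(\Ce(\Ha,\Ka))=\{Y\in B_h(\Ka\otimes\Ha):\ \Tr_\Ka Y\text{ is a scalar multiple of }I_\Ha\}$. The inclusion $\subseteq$ is immediate since $\Tr_\Ka X=I_\Ha$ for every channel Choi matrix; for $\supseteq$, given $Y$ with $\Tr_\Ka Y=\gamma I_\Ha$ one takes $t>0$ large enough that $Z:=Y+t\dim(\Ka)^{-1}I_{\Ka\otimes\Ha}\ge 0$, observes $\Tr_\Ka Z=(\gamma+t)I_\Ha$ with $\gamma+t>0$, rescales $Z$ to a channel Choi matrix, and writes $Y$ as a difference of nonnegative multiples of channel Choi matrices. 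Setting $\Delta:=\lambda\sigma_0-(1-\lambda)\sigma_1=\dim(\Ha)^{-1}(\lambda X_{\Phi_0}-(1-\lambda)X_{\Phi_1})$, both $\Delta$ and $(1-\lambda)\sigma_1$ lie in $J$ (their partial traces over $\Ka$ are multiples of $I_\Ha$), so by the above description $|\Delta|\in J$ if and only if $\Tr_\Ka|\lambda X_{\Phi_0}-(1-\lambda)X_{\Phi_1}|$ is a scalar multiple of $I_\Ha$.

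The core step is then the following. By Corollary~\ref{coro:opt_class_dec} applied to the loss function via Remark~\ref{rem:payoff_loss}, a $1$-tester $\{M_0,M_1\}$ with $M_0+M_1=c$ and decomposition $M=\Lambda\circ\chi_c$ is optimal for $(\ee,\lambda,w)$ if and only if the POVM $\Lambda$ is optimal for the minimal-error discrimination of $\sigma_0,\sigma_1$ with prior $\lambda$, and moreover $p:=\lambda\sigma_0\Lambda_0+(1-\lambda)\sigma_1\Lambda_1\in\chi_c(J)=J$. The key observation is that optimality of $\Lambda$ pins $p$ down to its canonical Helstrom form: by Corollary~\ref{coro:povms} it forces $p$ to be hermitian, hence $\Lambda_0$ to commute with $\Delta$, and it forces $p\ge\lambda\sigma_0$ and $p\ge(1-\lambda)\sigma_1$; together these make $\Lambda_0$ the identity on the positive and zero on the negative spectral subspace of $\Delta$, so that $p=(1-\lambda)\sigma_1+\Delta_+$. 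Hence $p\in J$ iff $\Delta_+\in J$ iff (since $|\Delta|=2\Delta_+-\Delta$ and $\Delta\in J$) $|\Delta|\in J$, which by the previous paragraph is exactly the asserted condition. For the ``if'' direction one exhibits a concrete optimal $\Lambda$, namely the Helstrom test with $\Lambda_0$ the projection onto the range of $\Delta_+$ completed to a POVM, and verifies optimality via Corollary~\ref{coro:povms}; for ``only if'' one uses that an optimal $1$-tester with maximally entangled input necessarily has $c=\dim(\Ha)^{-1}I$, so that the criterion above applies to it.

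The main obstacle is not a hard estimate but the bookkeeping: correctly identifying ``maximally entangled input'' with $M_0+M_1=\dim(\Ha)^{-1}I$ (so that the invertibility hypothesis of Corollary~\ref{coro:opt_class_dec} holds and $c^{1/2}Jc^{1/2}$ collapses to $J$), computing $\mathrm{span}(\Ce(\Ha,\Ka))$ as the $\Tr_\Ka$-preimage of the scalar multiples of $I_\Ha$, and the slightly fussy verification that optimality of the residual measurement $\Lambda$ forces the value operator into the form $(1-\lambda)\sigma_1+\Delta_+$, which is what allows the membership $p\in J$ to be rephrased purely in terms of $|\lambda X_{\Phi_0}-(1-\lambda)X_{\Phi_1}|$.
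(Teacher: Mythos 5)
Your proposal is correct and follows essentially the same route as the paper, which simply specializes Example~\ref{ex:hypo} (via Corollaries~\ref{coro:opt_class_dec} and~\ref{coro:povms}) to $B=\Ce(\Ha,\Ka)$ with $c=M_0+M_1=I_\Ka\otimes\dim(\Ha)^{-1}I_\Ha$; your added details --- the computation of $J=\mathrm{span}(\Ce(\Ha,\Ka))$ as the $\Tr_\Ka$-preimage of $\mathbb{R}I_\Ha$, and the verification that every optimal residual POVM yields the same $p=(1-\lambda)\sigma_1+\Delta_+$ --- are exactly the points the paper leaves implicit, and the latter is needed for the ``only if'' direction.
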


\end{document}